\documentclass[12pt, draftclsnofoot, onecolumn]{IEEEtran}
\usepackage{latexsym}
\usepackage{amsfonts}
\usepackage{amsbsy}
\usepackage{amsmath,amssymb}
\usepackage{times}
\usepackage{graphicx}
\usepackage{enumerate}
\usepackage[usenames]{color}
\usepackage[dvips]{pstcol}
\usepackage{epstopdf}

\usepackage{subeqnarray}
\usepackage{cases}
\usepackage{stfloats}
\usepackage{cases}
\usepackage{subeqnarray}
\usepackage{amsmath}
\usepackage{multicol}
\usepackage{multirow}
\usepackage{bigstrut}
\usepackage{amsthm}
\usepackage{booktabs}
\usepackage{bm}
\usepackage{algorithm}
\usepackage{algorithmic}
\usepackage{cite}
\usepackage{setspace}
\linespread{1.45}
\input epsf
\title{Efficiency Maximization for UAV-Enabled Mobile Relaying Systems with Laser Charging}

\author{Ming-Min Zhao, Qingjiang Shi, and Min-Jian Zhao
	\thanks{
		M. M. Zhao and M. J. Zhao are with the College of Information Science and Electronic Engineering, Zhejiang University, Hangzhou 310027, China (e-mail: \{zmmblack, mjzhao\}@zju.edu.cn).
		
Q. Shi is with the School of Software Engineering, Tongji University, Shanghai 200092, China (e-mail: shiqj@tongji.edu.cn).
}
 }

\begin{document}
	\maketitle
	\vspace{-4em}
	\begin{abstract}
		This work studies the joint problem of power and trajectory optimization in an unmanned aerial vehicle (UAV)-enabled mobile relaying system. In the considered system, in order to provide convenient and sustainable energy supply to the UAV relay, we consider the deployment of a power beacon (PB) which can wirelessly charge the UAV and it is realized by a properly designed laser charging system. To this end, we propose an efficiency (the weighted sum of the energy efficiency during information transmission and wireless power transmission efficiency) maximization problem by optimizing the source/UAV/PB transmit powers along with the UAV's trajectory. This optimization problem is also subject to practical mobility constraints, as well as the \emph{information-causality constraint} and \emph{energy-causality constraint} at the UAV.
		Different from the commonly used alternating optimization (AO) algorithm, two joint design algorithms, namely: the concave-convex procedure (CCCP) and penalty dual decomposition (PDD)-based algorithms, are presented to address the resulting non-convex problem, which features complex objective function with multiple-ratio terms and coupling constraints. These two very different algorithms are both able to achieve a stationary solution of the original efficiency maximization problem.
		Simulation results validate the effectiveness of the proposed algorithms.
	\end{abstract}
\vspace{-0em}
	\begin{IEEEkeywords}\vspace{-0em}
	Mobile relaying, trajectory and power optimization, UAV communication, wireless power transfer.
	\end{IEEEkeywords}

	\vspace{-0em}
\section{Introduction}
Thanks to the continuous cost reduction and device miniaturization in unmanned aerial vehicles (UAVs), wireless communications equipped and enabled by UAVs have attracted a lot of attentions recently, such as relaying, data gathering, secure transmission and information dissemination, etc \cite{Zeng2016Mag, Xiao2016Mag, Zeng2016, mei2018cellular, wu2019fundamental,  Zhang2019, Zhang2019Cellular, zeng2019accessing, Mei2019NOMAUAV, Cui2019UAV, You2019}.  In order to provide wireless data service for devices without infrastructure coverage due to, e.g., severe blocking by urban or mountainous terrain, communications infrastructure failure caused by natural disasters, etc., UAV-enabled wireless communication exhibits great potential in providing throughput/reliability
improvement and coverage extension. Among the various applications enabled by UAVs, the use of UAVs as relay nodes for achieving high-speed and reliable wireless communications between two or more distant users whose direct communication links are blocked or corrupted, is expected to play an important role in future communication systems \cite{Zeng2016Mag, Zeng2016}.

\vspace{-0em}
\subsection{Related Works and Motivation}
UAV relays can be generally categorized into two types, i.e., statistic relaying and mobile relaying. The researches on statistic UAV relaying usually aim to find the best UAV position that maximizes the performance of the wireless network, along with the corresponding resource allocation strategy \cite{Chen2017ICC, Chen2018CL, Esrafilian2018asilomar, Fan2018CL,Xue2018access, Chen2018TWC, Li2019TMC, li2019uav}. Specifically, in \cite{Chen2017ICC}, an algorithm was proposed to find the optimal position of the UAV based on the fine-grained line-of-sight (LoS) information. The work \cite{Chen2018CL} investigated the optimum placement of UAV, where the total power loss, the overall outage and bit error rate were derived as reliability measures. The work \cite{Esrafilian2018asilomar} studied the optimal placement problem of a UAV relay without the need of any prior knowledge on the user locations and the underlying wireless channel pathloss parameters. In \cite{Fan2018CL}, a system of multiple communication pairs with one UAV relay was considered, the node placement and resource allocation was jointly optimized. In \cite{Xue2018access}, joint 3D location and power optimization was investigated. Placement of multiple UAVs was considered in \cite{Chen2018TWC}, where the cases that multiple UAVs form either a single multi-hop link or multiple dual-hop links were analyzed. The work \cite{Li2019TMC} proposed to use UAVs as floating relaying nodes in order to resolve the problem of undesirable channel conditions of indoor users. The work \cite{li2019uav} considered a UAV-enabled two-way relaying system, where the joint optimization of UAV positioning and transmit powers was studied. 

Compared to the statistic relaying scheme, the deployment of UAVs which serve as mobile relaying nodes is a more cost-effective solution to extend the wireless communication range  and offer more reliable connectivities. Generally, two distinct advantages can be achieved by UAV-enabled mobile relaying systems: 1) enhanced performance brought up by the dynamic adjustment of relay locations to better coordinate with the environment; 2) the high mobility of UAVs enables the system to provide more flexible and responsive serves. As a result, the exploitation and exploration of UAV-enabled mobile relaying for more efficient physical layer designs have received a lot of attention recently \cite{Anazawa2015Globecom, Zeng2016,Jiang2018access,  Zhang2018access, Zhang2017ICC, Zhang2018CL}. In particular, the work \cite{Anazawa2015Globecom} proposed to use a mobile relay to carry data for several isolated communities and a genetic algorithm was designed where the trajectories of the mobile relay were represented by chromosomes that evolve to approximate the optimal solution. In \cite{Zeng2016}, the throughput maximization problem in a decode-and-forward (DF) mobile relaying system was studied by jointly optimizing the source/relay transmit powers and the relay trajectory. An alternating optimization (AO)-based algorithm was proposed to optimize the power allocation and relay trajectory in a sequential manner. The work \cite{Zhang2018access} extended that of \cite{Zeng2016} to the multi-hop scenario, where a single multi-hop link was considered. The works \cite{Jiang2018access} and \cite{Zhang2018CL} investigated the use of amplify-and-forward (AF) relay strategy. In \cite{Zhang2017ICC}, the spectrum efficiency and energy efficiency were optimized by assuming that the circular trajectory and time-division duplexing (TDD) were adopted.
Furthermore, UAV-enabled mobile relaying can also be utilized to facilitate secure transmissions \cite{Wang2017WCL, Wang2018access, Li2018globecom, xiao2018secrecy, Cheng2019TCOM}, full-duplex communications \cite{Wang2018JSAC} and wireless power transfer (WPT) \cite{Xie2019IOT}, etc.

Despite the various benefits brought about by UAV-enabled mobile relaying, the UAV's operations are usually restricted by many energy-consuming factors, such as the propulsion power to support its mobility, communication with the ground devices, etc. Therefore, many of the advantages of UAV-enabled wireless communication systems would be untouchable if the UAV's battery capacity is limited and no additional power supply is available. Recently, laser power is becoming a viable solution to prolong the flight time of UAVs \cite{ZhangDLC2018, Ouyang2018ICCworkshops}. Compared to other
WPT techniques enabled by wind, sunlight, or radio frequency (RF) signals, the laser-beamed power supply is more stable and it can deliver much larger energy amounts. It is regarded as an important technique for emergency responses, military operations, and also to accelerate the pace of implementing 5G-oriented UAV networks \cite{Huo2019}. Moreover, the field tests conducted in \cite{Nugent2010} have validated the feasibility of laser-powered UAVs. Therefore, in order to provide convenient and sustainable energy supply to the UAV, we consider the employment of a laser power beacon (PB), which is able to send laser beams to charge the UAV in flight. As a result, in the considered mobile relaying system, we need to take the \emph{energy-causality constraint} at the UAV relay into consideration, i.e., the total energy consumption of the UAV relay at the current time slot cannot exceed its remaining battery storage, in order to maintain its sustainable operations. 

\subsection{Our Contributions}
To this end, we propose an efficiency maximization problem, where the energy efficiency during information transmission and the laser power transmission efficiency are both taken into consideration by adding an adjustable weighting factor between them. In the considered problem, the UAV's trajectory and the transmit powers of the source, UAV and laser PB are jointly optimized under the mobility constraints, information-causality and energy-causality constraints at the UAV. This joint design problem is very challenging due to the facts that the objective function is in a multiple-ratio form, the constraints are highly non-convex and the optimization variables are tightly coupled both in the objective and constraints. By taking advantage of the problem structure, we propose two algorithms which can both converge to the set of stationary solutions. The first algorithm, i.e., the concave-convex procedure (CCCP)-based algorithm, is designed by carefully introducing auxiliary variables and approximating the underlying non-convex components in the considered problem by convex ones. To derive the second algorithm, we employ the penalty dual decomposition (PDD) framework \cite{ShiPDD2017} and demonstrate that the optimization variables as well as the introduced auxiliary variables can be decoupled into several separate blocks. Then, the joint design problem can be addressed by iterating over a sequence of simple and efficient updates in each block of variables. These two algorithms exhibit similar performance in simulations, but they are essentially very different and each of them offers different advantages, i.e.,  the CCCP-based algorithm is able to converge within fewer iterations, while the PDD-based algorithm is more implementation-friendly.

The main contributions of this work can be summarized as follows:

1) A general optimization framework for joint power allocation and trajectory design in a UAV-enabled mobile relaying system with laser charging is proposed. In particular, the weighted sum of the information transmission efficiency and power transmission
efficiency is proposed as the objective function, the source/UAV/PB transmit powers and the relay trajectory are jointly optimized under the mobility, information-causality and energy-causality constraints.

2) Despite the highly non-convexity of the considered problem and the intrinsic coupling in the optimization variables, two joint design algorithms, i.e., the CCCP and PDD-based algorithms, are proposed which are both guaranteed to converge to the set of stationary solutions.

3) In order to validate the effectiveness of the proposed algorithms, computer simulations are conducted and the performance of the AO-based algorithm is also investigated for comparison. We demonstrate that the proposed joint design algorithms are able to outperform the commonly used AO-based algorithm. Furthermore, the impacts of different laser wavelengths and weather conditions are shown, as well as the tradeoff between the information/power transmission efficiencies.

\subsection{Organization of the Paper and Notations}
The rest of the paper is organized as follows. In Section \ref{sec_system_model}, we present the considered UAV-enabled mobile relaying system model and the corresponding problem formulation. In Section \ref{sec_CCCP} and \ref{sec_PDD}, the proposed CCCP and PDD-based algorithms are developed, respectively, along with their complexity analysis. In Section \ref{sec_simulations}, simulations are conducted to characterize the performance of the proposed algorithms and Section \ref{sec_conclusion} concludes the paper.

\emph{Notations:} Scalars, vectors and matrices are respectively denoted by lower case, boldface lower case and boldface upper case letters. For a matrix $\mathbf{X}$, $\mathbf{X}^T$ and $\mathbf{X}^H$ denote its transpose and conjugate transpose, respectively. $\mathbf{a} \boldsymbol{\cdot} \mathbf{b}$ represents the dot product between the vectors $\mathbf{a}$ and $\mathbf{b}$. $\|\cdot\|$ denotes the Euclidean norm of a complex vector, $\Pi_{[a,b]}$ represents the projection operator onto the interval $[a,b]$ and $\odot$ denotes the Hadamard product.
The set difference is defined as $\mathcal{A}\backslash \mathcal{B} \triangleq \{x| x\in\mathcal{A},x\notin \mathcal{B}\}$.

\section{System Model and the Relay Problem} \label{sec_system_model}
In this work, we consider a UAV-enabled mobile relaying system which contains a source node, a destination node, a UAV and a laser PB, as shown in Fig. \ref{systemmodel}. We assume that the direct link between the source and the destination is sufficiently weak and hence can be ignored due to e.g., severe blockage, and the UAV serves as a mobile relay node to assist their communications \cite{Zeng2016}. Furthermore, we assume that the UAV is wireless-powered by a PB which is realized by a properly designed laser charging system \cite{ZhangDLC2018}.
\begin{figure}[hbtp]\vspace{-0em}
	\setlength{\abovecaptionskip}{-0cm}
	\setlength{\belowcaptionskip}{-0cm}
  \centering
  \includegraphics[width = 0.6\textwidth]{./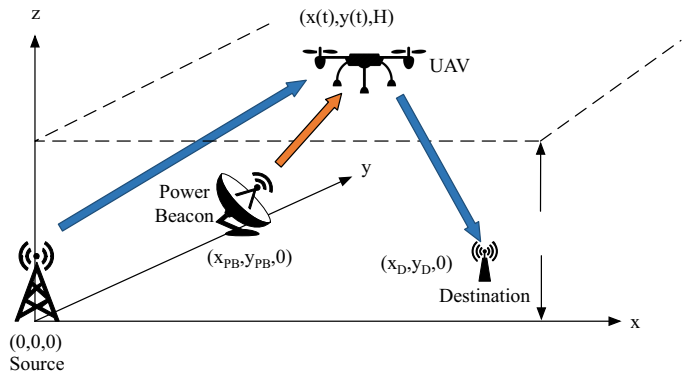}
  \caption{The considered UAV-enabled mobile relaying system with a laser PB.}\label{systemmodel} \vspace{-0em}
\end{figure}

We consider a Cartesian coordinate system without loss of generality, where the source, the destination and the PB are located at $\mathbf{q}_S\triangleq(0, 0, 0)$, $\mathbf{q}_D \triangleq (x_D, y_D, 0)$ and $\mathbf{q}_P \triangleq(x_{PB},y_{PB},0)$ respectively. For simplicity, we assume that the UAV is flying at a fixed altitude $H$ and $H$ could be chosen to be the minimum altitude that is required for terrain or building avoidance without frequent aircraft ascending or descending.\footnote{Note that the proposed algorithms can be extended to the case where the UAV's altitude $H$ is also a design variable without much difficulty.} Moreover, we focus on the UAV's operation during flight and ignore its take-off and landing phases. We discretize the time interval $T$ into $N$ equally spaced time slots, i.e., $T = N\delta_t$ , where $\delta_t$ denotes the elemental slot length, which is chosen to be sufficiently small.  Thus, the trajectory of the UAV $(x(t), y(t),H)$ over $T$ can be approximated by the $N$-length sequences $(\mathbf{q}_n \triangleq(x_n, y_n,H))_{n=1}^N$, where $(x_n, y_n)$ denotes the UAV's $x-y$ coordinate at slot $n \in \mathcal{N} \triangleq\{1,\cdots,N\}$. Let $\mathbf{q}_I \triangleq(x_I, y_I, H)$ and $\mathbf{q}_F \triangleq(x_F, y_F , H)$ denote the initial and final locations of the UAV relay, which are given depend on various factors \cite{Zeng2016}. Furthermore, let $v_{\textrm{max}}$ denote the maximum UAV speed, then we assume $v_{\textrm{max}} \geq \|\mathbf{q}_F - \mathbf{q}_I\|/T$ is always satisfied such that there exists at least one feasible trajectory. With regards to the mobility constraints of the UAV \cite{JeongTVT2018}, we have\footnote{For a fixed-wing UAV, the mobility constraints should further include $\|\mathbf{v}_n \| \geq v_{\textrm{min}}$ and $\arccos\Big(\frac{(\mathbf{q}_{n+1}-\mathbf{q}_n) \boldsymbol{\cdot} (\mathbf{q}_{n}-\mathbf{q}_{n-1}) }{\|\mathbf{q}_{n+1}-\mathbf{q}_n\|  \|\mathbf{q}_{n}-\mathbf{q}_{n-1}\|}\Big) \leq \varpi_{\textrm{max}}$, where $v_{\textrm{min}}$ denotes the stall speed and $\varpi_{\textrm{max}}$ represents the maximum angular turn rate in $\textrm{rad}/s$. However, in order to better focus on laser charging, we only consider constraint \eqref{mobility_cons1} when dealing with the UAV's mobility. Further investigation into more sophisticated UAV controls is left for future work.} \vspace{-0em}
\begin{subequations} \label{mobility_cons1} \small 
	\begin{align}
	& \mathbf{q}_1 = \mathbf{q}_I, \; \mathbf{q}_N = \mathbf{q}_F,\label{mobility_cons1_sub2}\\
	& \|\mathbf{v}_n\| \triangleq {\|\mathbf{q}_{n+1} - \mathbf{q}_n\|}/{\delta_t} \leq v_{\textrm{max}},\; \forall n\in \mathcal{N}\backslash \{N\}. 
	\end{align}
\end{subequations}

\subsection{Information Transmission Model}
We assume that LoS links dominate the wireless channels from the source to the UAV and that from the UAV to the destination, and the Doppler effect due to the mobility of the UAV can be perfectly compensated \cite{Zeng2016}. Therefore, at slot $n$, the channel power from the source to the UAV follows the free-space path loss model, which can be expressed as 
$\bar h_n^{sr} = \beta_0 (d_n^{sr})^{-2} = {\beta_0}/{\|\mathbf{q}_n-\mathbf{q}_S\|^2},\; n \in \mathcal{N}$, 
where $\beta_0$ denotes the channel power at the reference distance $d_0 = 1$ meter (m), whose value depends on the carrier frequency, antenna gain, etc., and $d_n^{sr} = \|\mathbf{q}_n- \mathbf{q}_S\|$ is the link distance between the source and the UAV at slot $n$. Similarly, the channel power from the UAV to the destination at slot $n$ can be expressed as
$\bar h_n^{rd} = {\beta_0}/{\|\mathbf{q}_n-\mathbf{q}_D\|^2}$.

Let $p_n^s$ and $p_n^r$ denote the transmit powers of the source and the UAV at slot $n$, then the maximum transmission rate from the source to the UAV and from the UAV to the destination in bits/second/Hz (bps/Hz) at slot $n$ can be expressed as
$R_n^s = \log_2\left(1 + {p_n^s\bar h_n^{sr}}/{\sigma^2}\right) = \log_2\left(1 + {p_n^s\gamma_0}/{\|\mathbf{q}_n-\mathbf{q}_S\|^2}\right)$ and 
$R_n^r = \log_2\left(1 + {p_n^r\bar h_n^{rd}}/{\sigma^2}\right) = \log_2\left(1 + {p_n^r\gamma_0}/{\|\mathbf{q}_n-\mathbf{q}_D\|^2}\right)$, 
where $\sigma^2$ is the noise power and $\gamma_0 = \beta_0/\sigma^2$ denotes the reference signal-to-noise ratio (SNR).

\subsection{Wireless Power Transmission Model}
In this work, we model the PB as a laser charging system which was proposed in \cite{ZhangDLC2018}, where the optical components are divided into two separate parts, the transmitter and the receiver, respectively. Consequently, the received power $P_n^r$ of the UAV at slot $n$ can be expressed as
$P_n^r = \eta_{el} \eta_n^{lt} \eta_{le} P_n^s$, 
where $\eta_{el}$, $\eta_n^{lt}$ and $\eta_{le}$ denote the electricity-to-laser conversion efficiency, the laser transmission efficiency and the laser-to-electricity conversion efficiency, respectively \cite{ZhangDLC2018}, $P_n^s$ represents the transmit power of the PB at slot $n$. %$P_n^l$ denotes the laser power.
Furthermore, $\eta_n^{lt}$ can be modeled as
$\eta_n^{lt} = e^{-\alpha d_n^{rp}}$ \cite{LiuSemiconductor2005}, 
where $\alpha$ denotes the laser attenuation coefficient and $d_n^{rp}$ is the distance between the UAV and the PB at slot $n$. $\alpha$ can be further depicted as $\alpha = \frac{\varepsilon}{\kappa}\left({\lambda}/{\chi}\right)^{-\varrho}$,
where $\varepsilon$ and $\chi$ are two constants, $\kappa$, $\lambda$ and $\varrho$ denote the visibility, wavelength and size distribution of the scattering particles, respectively.

Employing the approximation method in \cite{ZhangDLC2018}, we can alternatively model the received power $P_n^r$ as follows:
\begin{equation} \small
P_n^r = \left\{ {\begin{array}{*{20}{l}}
	{a_1 a_2 \eta_n^{lt} P_n^s + a_2 b_1 \eta_n^{lt} + b_2,\;P_n^s \geq P_{\textrm{min}}^s},\\
	{0,\;0 \leq P_n^s < P_{\textrm{min}}^s },
	\end{array}} \right.
\end{equation}
where $P_{\textrm{min}}^s$ denotes the minimum supply power that is required to activate the corresponding circuits of the laser transceiver, and the involved parameters are listed in Table \ref{tab:laser_parameter}.
Note that $P_n^r$ is a non-convex function with respect to the UAV's trajectory $\mathbf{q}_n$.

\begin{table}[htbp] \small
		\renewcommand{\arraystretch}{1.2}
	  \centering 
	\caption{Laser Power Transmission Parameters} \vspace{-0em}
	\begin{tabular}{ccccc}
		
		\cline{1-4}
		\multicolumn{1}{|c|}{Wavelength}   & \multicolumn{1}{c|}{PV-panel material} & \multicolumn{1}{c|}{Temperature}   & \multicolumn{1}{c|}{Weather}            & \multicolumn{1}{l}{}                   \\ \cline{1-4}
		\multicolumn{1}{|c|}{810nm/1550nm} & \multicolumn{1}{c|}{GaAs-based}        & \multicolumn{1}{c|}{25$^\circ$C}            & \multicolumn{1}{c|}{Clear Air/Haze/Fog} & \multicolumn{1}{l}{}                   \\ \cline{1-4}
		\multicolumn{1}{l}{}               & \multicolumn{1}{l}{}                   & \multicolumn{1}{l}{}               & \multicolumn{1}{l}{}                    & \multicolumn{1}{l}{}                   \\ \hline
		\multicolumn{1}{|c|}{Weather}             & \multicolumn{1}{c|}{$\varepsilon$}     & \multicolumn{1}{c|}{$\chi$} & \multicolumn{1}{c|}{$\kappa$}           & \multicolumn{1}{c|}{$\varrho$}         \\ \hline
		\multicolumn{1}{|c|}{Clear Air}    & \multicolumn{1}{c|}{3.92}              & \multicolumn{1}{c|}{550nm}         & \multicolumn{1}{c|}{10km}               & \multicolumn{1}{c|}{1.3}               \\ \hline
		\multicolumn{1}{|c|}{Haze}         & \multicolumn{1}{c|}{3.92}              & \multicolumn{1}{c|}{550nm}         & \multicolumn{1}{c|}{3km}                & \multicolumn{1}{c|}{0.16$\kappa$+0.34} \\ \hline
		\multicolumn{1}{|c|}{Fog}          & \multicolumn{1}{c|}{3.92}              & \multicolumn{1}{c|}{550nm}         & \multicolumn{1}{c|}{0.4km}              & \multicolumn{1}{c|}{0}                 \\ \hline
		\multicolumn{1}{l}{}               & \multicolumn{1}{l}{}                   & \multicolumn{1}{l}{}               & \multicolumn{1}{l}{}                    & \multicolumn{1}{l}{}                   \\ \hline
		\multicolumn{1}{|c|}{Wavelength}             & \multicolumn{1}{c|}{$a_1$}                & \multicolumn{1}{c|}{$b_1$}            & \multicolumn{1}{c|}{$a_2$}                 & \multicolumn{1}{c|}{$b_2$}                \\ \hline
		\multicolumn{1}{|c|}{810nm}        & \multicolumn{1}{c|}{0.445}             & \multicolumn{1}{c|}{-0.75}         & \multicolumn{1}{c|}{0.5414}             & \multicolumn{1}{c|}{-0.2313}           \\ \hline
		\multicolumn{1}{|c|}{1550nm}       & \multicolumn{1}{c|}{0.34}              & \multicolumn{1}{c|}{-1.1}          & \multicolumn{1}{c|}{0.4979}             & \multicolumn{1}{c|}{-0.2989}           \\ \hline
	\end{tabular}
\label{tab:laser_parameter} \vspace{-0em}
\end{table}

\vspace{-0em}
\subsection{Energy Consumption Model}
Note that the energy consumption of the UAV is dominated by the propulsion power for maintaining the UAV aloft and supporting its mobility, which is usually much higher than the communication power consumption (e.g., hundreds of watts versus a few watts or even mW) \cite{Zeng2016Mag}. As a result,  
we consider the model in \cite{JeongTVT2018} and \cite{Xue2014} to characterize the energy consumption of the UAV due to flying, which postulates the flying energy at each slot $n$ to depend only on the velocity vector $\mathbf{v}_n$ as
\begin{equation} \label{EC_flying} \small
E_n^F(\mathbf{v}_n) = \omega \|\mathbf{v}_n\|^2,
\end{equation}
where $\omega = 0.5M \delta_t$ and $M$ is the UAV's mass, including its payload.\footnote{
There are more practical models which assume that the energy $E_n^F$ also depends on the acceleration vector $\mathbf{a}_n$ \cite{Leishman2006, Zeng2017}.
Furthermore, for rotary-wing aircrafts, there would be energy consumption when the UAV is in hover state \cite{Dorling2017UAV}. However, in order to illustrate the merits of the proposed algorithms and to simplify derivations, we focus on model \eqref{EC_flying} in this work.}

\vspace{-0em}
\subsection{Problem Formulation}
In this work, we aim to maximize the information transmission efficiency of the UAV-enabled relay system and the laser power transmission efficiency simultaneously subject to the information/energy-causality constraints, the power budget constraints and the UAV's mobility constraints \eqref{mobility_cons1}. Specifically, the information-causality constraints mean that the UAV can only forward the data
that has already been received from the source at each slot $n$ and by assuming that the processing delay at the UAV is one slot, we have
\begin{equation} \label{information-causality} \small
\sum\limits_{n=2}^{m} {R_n^r} \leq \sum\limits_{n=1}^{m-1} {R_n^s},\; m\in \mathcal{N}\backslash \{1\}.
\end{equation}
It is obvious that the source should not transmit at the last slot $N$ and thus we can see that $R_N^s = R_1^r = 0$ should be satisfied (and hence $p_N^s = p_1^r = 0$) without loss of optimality.
For simplicity, we assume that the UAV is equipped with a data buffer with sufficiently large storage size. Similarly, in order to guarantee that the UAV can safely reach the final location with enough battery level in case of emergence and to avoid overcharging, the following energy-causality constraint should also be satisfied:
\begin{equation} \label{energy-causality} \small
\theta \leq \mathcal{E} - \sum\limits_{n=1}^m E_n^F(\mathbf{v}_n) + \sum\limits_{n=1}^m P_n^r \delta_t  \leq \mathcal{E},\; m\in \mathcal{N},
\end{equation}
where $\mathcal{E}$ represents the UAV's energy budget (i.e., the maximum energy storage capacity of the UAV's battery if we assume that the UAV is fully charged before taking off) and $\theta$ is a predefined threshold which characterizes the minimum energy storage during the flight. 

Furthermore, the energy efficiency of the UAV during information transmission can be expressed as
\begin{equation} \label{f_EE_function}
f_{\textrm{EE}} (\{\mathbf{q}_n,p_n^s,p_n^r\})\triangleq {\sum\limits_{n = 2}^{N} {R_{n}^r}} \Big{/} \left(\upsilon^s\sum\limits_{n = 1}^{N-1} p_n^s + \upsilon^r\sum\limits_{n = 2}^{N} p_n^r + NP_{\textrm{on}}\right),
\end{equation}
where $\upsilon^s \geq 1$ and $\upsilon^r \geq 1$ are the
power inefficiencies of the amplifiers in the source and the UAV, respectively, $P_{\textrm{on}}$ denotes the constant link on-power induced mainly by signal processing
(it will be elaborated in Section \ref{sec_simulations}). The laser power transmission efficiency is given by
\begin{equation} \small
f_{\textrm{PE}} (\{\mathbf{q}_n,P_n^s\}) \triangleq {\sum\limits_{n=1}^N P_n^r}\Big{/}\left(\sum\limits_{n=1}^N P_n^s\right).
\end{equation}
Therefore, the considered optimization problem can be formulated as
\begin{subequations} \label{multi_ratio_problem} \small
\begin{align}
&\mathop {\max }\limits_{\{ \mathbf{q}_n,\;p_n^s,\;p_n^r,\;P_n^s\} } f_{\textrm{EE}} (\{\mathbf{q}_n,p_n^s,p_n^r\})+ \gamma f_{\textrm{PE}} (\{\mathbf{q}_n,P_n^s\})  \label{objective_function_original_problem}\\
&\textrm{s.t.}\;  0 \leq p^s_n \leq  p_{\textrm{max}}^s,\; n\in \mathcal{N}\backslash\{N\},\; 0 \leq p_{n}^r  \leq p_{\textrm{max}}^r,\;n\in \mathcal{N}\backslash\{1\}, \label{power_cons_ori_UAV} \\
&  P_{\textrm{min}}^s \leq P_n^s \leq P_{\textrm{max}}^s,\;n\in \mathcal{N}, \label{power_cons_ori_PB}\\
&  \sum\limits_{n=2}^{N} {R_n^r}  \geq R_{\textrm{sum}},\label{mini_rate}\\
& \eqref{mobility_cons1},\;\eqref{information-causality}\; \textrm{and}\; \eqref{energy-causality}, \notag
\end{align}
\end{subequations}
where $\gamma$ denotes a weighting factor that accounts for the priority of $f_{\textrm{PE}} (\cdot)$ over $ f_{\textrm{EE}} (\cdot)$; \eqref{power_cons_ori_UAV} and \eqref{power_cons_ori_PB} denote the transmit power constraints of the UAV and the PB, respectively; $R_{\textrm{sum}}$ represents the minimum sum-rate that should be achieved during the flight. Note that in order to maximize $f_{\textrm{EE}} (\cdot)$, the UAV should fly close to the source and destination, however for the maximization of $f_{\textrm{PE}} (\cdot)$, the UAV should be close to the PB instead. Since the source, the destination and the PB are not co-located in general, these two efficiencies are usually conflict with each other and there exists a tradeoff between them. Throughout this paper, we assume that the flight duration $T$ is sufficiently long such that the UAV must harvest energy from the PB otherwise its battery would be drained out.\footnote{Note that if the UAV has enough energy during the whole flight, the considered problem would reduce to the conventional UAV-enabled relay system, a similar problem has been considered in \cite{Zeng2016} and it is out of the scope of this paper.} 
\newtheorem{remark}{Remark}
\begin{remark} \vspace{-0em} \emph{
Problem \eqref{multi_ratio_problem} is highly non-convex, which involves multiple fractional
terms in the objective function and the optimization variables are coupled in the constraints. It cannot be directly solved by standard convex optimization techniques. Moreover, neither the Dinkelbach's transformation \cite{Dinkelbach1967} nor the fractional programming technique \cite{Shen2018} can be directly applied to solve this problem, since the former cannot deal with objective functions with multiple-ratio terms and the latter is not designed to handle coupling constraints. A feasible approach for problem \eqref{multi_ratio_problem} is the AO-based algorithm, which alternating between power optimization and trajectory optimization, however, no optimality (e.g., to stationary solutions) can be theoretically declared for such an algorithm as has been shown in \cite{Zeng2016, Zhang2018, Jiang2018}, etc. To tackle this difficulty, in this work, we propose two algorithms to address problem \eqref{multi_ratio_problem} with different design techniques and both of them are guaranteed to achieve stationary solutions of problem \eqref{multi_ratio_problem}.}
\end{remark}
\begin{remark} \vspace{-0em}\emph{
In this work, we assume that the energy supplys of the communication and propulsion systems of the UAV are independent for emergency purposes, e.g., sending localization signals when the UAV does not have enough power to maintain aloft, etc. As a result, in \eqref{f_EE_function}, the denominator does not contain the propulsion power \eqref{EC_flying}. Besides, the PB's location will affect the overall performance, however, it is regarded as a fixed infrastructure in this work and its location is considered to be a predefined parameter that cannot be optimized. The case that propulsion power dominates the denominator of \eqref{f_EE_function} and the placement of the PB are left for future work. Moreover, the efficiencies of the communication and propulsion systems are formulated and optimized as two separate terms, i.e., $f_{\textrm{EE}} (\{\mathbf{q}_n,p_n^s,p_n^r\})$ and $f_{\textrm{PE}} (\{\mathbf{q}_n,P_n^s\})$. Otherwise, the objective function would become ${\sum\limits_{n = 2}^{N} {R_{n}^r}}\big{/}\Big({\sum\limits_{n=1}^N\omega \|\mathbf{v}_n\|^2 - \sum\limits_{n=1}^N P_n^r}\Big)$ and due to the fact that $P_n^s$ is not considered in this case, the laser power transmission efficiency would be ignored. }
\end{remark}

\vspace{-0em}
\section{The Proposed CCCP-based Algorithm} \label{sec_CCCP}
In this section, in order to make problem \eqref{multi_ratio_problem} more tractable, we propose to first transform it into an equivalent form by properly introducing
auxiliary variables; we then present a CCCP-based algorithm to address the resulting problem. The proposed algorithm is motivated by the observation that by some skillful mathematical manipulations, the objective function with multiple-ratio terms \eqref{objective_function_original_problem}, the pivotal coupling constraint \eqref{information-causality} and \eqref{energy-causality} can be expressed as difference of convex (DC) functions. Thus, we can use the CCCP technique \cite{CCCP2009} to iteratively solve problem \eqref{multi_ratio_problem}, where in each iteration only a convex subproblem is needed to be solved. 

\subsection{Problem Transformation} \label{problem_transformation_CCCP}

We first introduce auxiliary variables $s_n^r$ and $s_n^s$, which satisfy
\begin{subequations} \label{rate_auxi} \small
\begin{align}
{p_n^r\gamma_0}/({H^2+\|\mathbf{q}_n - \mathbf{q}_D\|^2}) \geq s_n^r,\;\forall n\in \mathcal{N}\backslash \{1\},\\
{p_n^s\gamma_0}/({H^2+\|\mathbf{q}_n - \mathbf{q}_S\|^2}) \geq s_n^s,\;\forall n\in \mathcal{N}\backslash \{N\}.
\end{align}
\end{subequations}
It can be seen that constraints \eqref{rate_auxi} must be satisfied with equality at optimality. If either of these two inequalities are satisfied with strict inequality, we can always decrease $p_n^r$ or $p_n^s$, such that a higher objective value can be achieved without violating any constraints.
As a result, problem \eqref{multi_ratio_problem} can be transformed into
\begin{subequations} \label{CCCP2_problem_equi}  \small
\begin{align}
&\mathop {\max }\limits_{\{ \mathbf{q}_n,p_n^s,p_n^r,P_n^s,s_n^r,s_n^s\} } \bar{f}_{\textrm{EE}} (\{\mathbf{q}_n,p_n^s,p_n^r,s_n^r\})+ \gamma f_{\textrm{PE}} (\{\mathbf{q}_n,P_n^s\}) \\
&\textrm{s.t.}\; \sum\limits_{n=2}^{m}  \log_2\left(1 +s_n^r\right) \leq \sum\limits_{n=1}^{m-1} \log_2\left(1 +s_n^s\right),\; m \in \mathcal{N}\backslash\{1\}, \label{info_casu_1}\\
& \sum\limits_{n=2}^{N}  \log_2\left(1 +s_n^r\right) \geq R_{\textrm{sum}},\label{rate2}\\
&\eqref{mobility_cons1},\;\eqref{energy-causality},\; \eqref{power_cons_ori_UAV},\; \eqref{power_cons_ori_PB} \;\textrm{and}\;\eqref{rate_auxi},
\end{align}
\end{subequations}
where
\begin{equation} \small
\bar{f}_{\textrm{EE}} (\{\mathbf{q}_n,p_n^s,p_n^r,s_n^r\})\triangleq {\sum\limits_{n = 2}^{N}  \log_2\left(1 +s_n^r\right) }\Big{/} \left(\upsilon^s\sum\limits_{n = 1}^{N-1} p_n^s + \upsilon^r\sum\limits_{n = 2}^{N} p_n^r + N P_{\textrm{on}}\right),
\end{equation}
and we can see that problem \eqref{CCCP2_problem_equi} is equivalent to \eqref{multi_ratio_problem}.

Then, we proceed to handle the objective function which is in a multiple-ratio form and the main idea is also to introduce some auxiliary variables. Specifically, for the information transmission efficiency part, i.e., $\bar{f}_{\textrm{EE}} (\{\mathbf{q}_n,p_n^s,p_n^r,s_n^r\})$, we resort to the employment of auxiliary variables $\tilde R$, $\tilde p$ and $E_i$, which satisfy
\begin{subequations} \label{E_i_constraints} \small
\begin{align}
& \sum\limits_{n = 2}^{N} \log_2\left(1 +s_n^r\right) \geq \tilde R,\; \upsilon^s\sum\limits_{n = 1}^{N-1} p_n^s + \upsilon^r\sum\limits_{n = 2}^{N} p_n^r + N P_{\textrm{on}} \leq \tilde p,\label{IE_cons2}\\
& \tilde R \geq \tilde p E_i. \label{IE_cons3}
\end{align}
\end{subequations}
With the help of these variables, we can observe that $\bar{f}_{\textrm{EE}} (\{\mathbf{q}_n,p_n^s,p_n^r,s_n^r\})$ can be replaced by a simple scalar variable $E_i$ and three additional inequality constraints in \eqref{E_i_constraints}. It can be shown that this transformation incurs no loss of optimality by a similar argument as for constraints \eqref{rate_auxi}. The power transmission efficiency part, i.e., $f_{\textrm{PE}} (\{\mathbf{q}_n,P_n^s\})$, can also be transformed into its equivalent form in a similar vein.
%Alternatively, we can directly optimize the fractional term without the help of the Dinkelbach method. Specifically, we can approximate the fractional term by its first order Taylor expansion, which can be expressed as
To be specific, introduce auxiliary variables $\{t_n\}$ and $\{\hat t_n\}$ which satisfy 
\begin{equation} \label{pow_1} \small
e^{-\alpha \sqrt{H^2 + \|\mathbf{q}_n - \mathbf{q}_P\|^2}} \geq t_n, (\textrm{usually}\; t_n <1)
\end{equation}
\begin{equation} \label{pow_2} \small
t_n P_n^s \geq \hat t_n,
\end{equation}
respectively, then $P_n^r$ can be rewritten as $P_n^r = a_1 a_2 \hat{t}_n + a_2 b_1 t_n + b_2$.
As a result, $f_{\textrm{PE}} (\{\mathbf{q}_n,P_n^s\})$ can be equivalently expressed as $E_e$, with the help of the following constraints:
\begin{subequations} \label{E_e_constraints} \small
\begin{align}
&\sum\limits_{n=1}^N  a_1 a_2 \hat t_n + a_2 b_1 t_n + b_2 \geq \tilde{t},\;\sum\limits_{n=1}^N P_n^s \leq \tilde P,\label{EE_cons2}\\
&\tilde t \geq \tilde P  E_{e},
\end{align}
\end{subequations}
where $\tilde{t}$, $\tilde P$ and $E_e$ are the introduced auxiliary variables. Therefore, we can see that the original objective function \eqref{objective_function_original_problem}, which is very difficult to handle, can now be equivalently transformed into the weighted sum of two scalar variables, i.e., $E_i +\gamma E_e$. However, as a cost for this simple representation,  we have to deal with the additional constraints \eqref{E_i_constraints}, \eqref{pow_1}, \eqref{pow_2} and \eqref{E_e_constraints}, which will be detailed in the next subsection.

Next, we focus on constraints \eqref{rate_auxi}, which are also difficult to address due to the fact that $\frac{x}{y^2} \geq z$ is non-convex. To tackle this difficulty, we resort to the help of two auxiliary variables $d_n^D$ and $d_n^S$, which measure the upper bounds of the squared distances from the UAV to the source and destination. Accordingly, constraints \eqref{rate_auxi} can be decomposed into
\begin{subequations} \label{equi_1} \small
\begin{align}
&H^2+\|\mathbf{q}_n - \mathbf{q}_D\|^2 \leq d_n^D, \label{rate_1}\\
& s_n^r d_n^D - p_n^r\gamma_0 \leq 0, \label{non_convex1}\\
&H^2+\|\mathbf{q}_n - \mathbf{q}_S\|^2 \leq d_n^S, \label{rate_2}\\
& s_n^s d_n^S - p_n^s\gamma_0 \leq 0.\label{non_convex2}
\end{align}
\end{subequations}
Note that constraint \eqref{rate_1} must be satisfied with equality at optimality, otherwise we can always decrease $d_n^D$, increase $s_n^r$ and $\tilde{R}$, and then properly adjust $E_i$ to increase the objective function. A similar argument also holds for constraint \eqref{rate_2}, therefore we omit the details for brevity.

To summarize, we conclude that problem \eqref{multi_ratio_problem} can be equivalently transformed into the following problem:
\begin{subequations} \label{equivalent_problem} \small
\begin{align}
& \mathop {\max }\limits_{\bm{\mathcal{X}}} \; E_i+\gamma E_e\\
& \textrm{s.t.} \;   \theta \leq \mathcal{E} - \sum\limits_{n=1}^m E_n^F(\mathbf{v}_n) + \sum\limits_{n=1}^m (a_1 a_2 \hat t_n  + a_2 b_1 t_n + b_2) \delta_t  \leq \mathcal{E},\; m\in \mathcal{N}, \label{energy_cons}\\
& \eqref{mobility_cons1},\; \eqref{power_cons_ori_UAV},\;\eqref{power_cons_ori_PB},\;\eqref{info_casu_1},\;\eqref{rate2},\;\eqref{E_i_constraints}-\eqref{equi_1}, \notag 
\end{align}
\end{subequations}
where $\bm{\mathcal{X}} \triangleq \{\mathbf{q}_n,p_n^s,p_n^r,P_n^s,s_n^r,s_n^s,\tilde{R},\tilde{p},E_i,E_e,\tilde{P},\tilde{t},t_n,\hat{t}_n, d_n^S,d_n^D\}$. Although problem \eqref{equivalent_problem} is now in a much simpler form than that of \eqref{multi_ratio_problem}, it is still highly non-convex and difficult to address. In the following, we present the design methodology to iteratively solve problem \eqref{equivalent_problem} by the concept of CCCP.

\vspace{-0em}
\subsection{Algorithm Design}
Non-convex constraints are generally difficult to handle, e.g., \eqref{info_casu_1}, \eqref{IE_cons3}, \eqref{pow_1}, \eqref{pow_2}, \eqref{E_e_constraints}, \eqref{non_convex1}, \eqref{non_convex2} and \eqref{energy_cons} etc. Among them, constraints \eqref{info_casu_1} and  \eqref{pow_1} are more difficult since the  logarithm and exponential functions are involved. In the following, we show that these constraints can be expressed in DC forms by proper transformations and then by employing the CCCP concept, problem \eqref{equivalent_problem} can be iteratively solved to stationary solutions. Unless otherwise stated, we use subscript $l$ to indicate the variables obtained in the $l$-th iteration.

Firstly, let us focus on constraints \eqref{info_casu_1} and \eqref{pow_1}. Since the $\log_2(\cdot)$ function is concave, \eqref{info_casu_1} can be readily viewed as a DC function. By approximating the
convex function $-\sum\limits_{n=2}^{m}  \log_2\left(1 +s_n^r\right) $  in the $l$-th iteration by its first order Taylor expansion around the current point $\{s_{n,l}^r\}$, we can obtain
\begin{equation} \label{info_caus2} \small
-\sum\limits_{n=1}^{m-1} \log_2\left(1 +s_n^s\right) + \sum\limits_{n=2}^{m} \Big(\log_2(1+s_{n,l}^r)  + \frac{1}{(1+s_{n,l}^r)\ln(2)}(s_n^r - s_{n,l}^r)\Big) \leq 0,\;m\in\mathcal{N}\backslash\{1\}.\\
\end{equation}
As for constraint \eqref{pow_1}, the following equivalent form can be obtained: 
$\sqrt{H^2 + \|\mathbf{q}_n - \mathbf{q}_P\|^2} \leq -({\ln t_n})/{\alpha} ,\; n \in \mathcal{N}$, and 
since the $\ln(\cdot)$ function is also concave and the left hand side is a second order cone (SOC) which is convex, this equivalent inequality is also in DC form and can be approximated by the following convex constraint:
\begin{equation} \label{SOCP_1} \small
\sqrt{H^2 +  \|{\mathbf{q}}_{n} - {\mathbf{q}}_P \|^2} + ({\ln t_{n,l}})/{\alpha} + (t_n-t_{n,l})/(\alpha t_{n,l}) \leq 0,\; n \in \mathcal{N}.\\
\end{equation}

Secondly, we consider constraints \eqref{IE_cons3}, \eqref{pow_2}, \eqref{E_e_constraints}, \eqref{non_convex1} and \eqref{non_convex2}. It can be observed that these constraints are all in the form of $xy-z\leq 0$ or $xy-z\geq 0$, which can be further expressed as $\frac{1}{2}(x+y)^2-\frac{1}{2}x^2 -\frac{1}{2}y^2-z \leq 0$ or $z - \frac{1}{2}(x+y)^2+\frac{1}{2}x^2 +\frac{1}{2}y^2 \leq 0$. They are also DC functions, and by using the CCCP concept, they can be approximated by convex function without any difficulty. The detailed expressions of these approximations will be given below.

Finally, it can be easily seen that \eqref{energy_cons} can be decomposed into one convex constraint and one DC constraint, which can be handled in a similar way. Therefore, in the $l$-th iteration of the proposed CCCP-based algorithm, we have the following convex problem:
\begin{subequations} \label{CCCP2_inner}  \small
	\begin{align}
	&\mathop {\max }\limits_{\{ \bm{\mathcal{X}}\} }\; E_i+\gamma E_e\\
	&\textrm{s.t.}\; \eqref{mobility_cons1},\; \eqref{power_cons_ori_UAV},\;\eqref{power_cons_ori_PB},\;\eqref{rate2},\;\eqref{IE_cons2},\;\eqref{EE_cons2},\;\eqref{rate_1},\;\eqref{rate_2},\;\eqref{info_caus2},\;\eqref{SOCP_1},\notag\\
	&  (s_n^s+d_n^S )^2 + (s_{n,l}^s)^2  +(d_{n,l}^S)^2 - 2s_{n,l}^s s_n^s - 2d_{n,l}^S d_{n}^S  - 2p_n^s\gamma_0 \leq 0,\\
	& (s_n^r+d_n^D )^2 + (s_{n,l}^r)^2  +(d_{n,l}^D)^2 - 2 s_{n,l}^r s_n^r - 2d_{n,l}^D d_{n}^D  - 2p_n^r\gamma_0 \leq 0,\\
	&( \tilde p+ E_i)^2 + \tilde p_l^2  + E_{i,l}^2 - 2\tilde p_l \tilde p - 2E_{i,l} E_{i} - 2\tilde R \leq 0,\\
	& (\tilde P +E_{e})^2 + {\tilde P_l}^2  + E_{e,l}^2 -2\tilde P_l\tilde P - 2E_{e,l} E_{e}  - 2\tilde t \leq 0,\\
	& 2\hat t_n  + t_n^2 + (P_n^s)^2 +(t_{n,l} +P_{n,l}^s)^2 - 2(t_{n,l} +P_{n,l}^s)(t_n+P_{n}^s)  \leq 0,\\
%	& \hat t_n  +\frac{1}{4} (t_n - P_n^s)^2 - \frac{1}{4} \left((t_{n,l} +P_{n,l}^s)^2 + 2(t_{n,l} +P_{n,l}^s)(t_n- t_{n,l}) + 2(t_{n,l} +P_{n,l}^s)(P_{n}^s - P_{n,l}^s)\right)  \leq 0,\\
	& \theta - \mathcal{E}  - \sum\limits_{n=1}^m (a_1 a_2 \hat t_n  + a_2 b_1 t_n + b_2)   + \sum\limits_{n=1}^m \kappa \|\mathbf{v}_n\|^2 \leq 0,\\
	& \sum\limits_{n=1}^m \kappa\left(\frac{-\|{\mathbf{q}}_{n+1,l} - {\mathbf{q}}_{n,l}\|^2 + 2({\mathbf{q}}_{n+1,l} - {\mathbf{q}}_{n,l})^T({\mathbf{q}}_{n+1} - {\mathbf{q}}_{n})}{\delta_t^2}\right) - \sum\limits_{n=1}^m (a_1 a_2 \hat t_n  + a_2 b_1 t_n + b_2) \delta_t \geq 0,
	\end{align}
\end{subequations}
which is a second-order cone program (SOCP) and it can be solved by some off-the-shelf solvers, such as CVX \cite{cvx}. The proposed CCCP-based algorithm to solve problem \eqref{multi_ratio_problem} is summarized in Algorithm \ref{CCCP_algorithm} and we have the following proposition regarding its convergence property:
\newtheorem{proposition}{\underline{Proposition}}
\begin{proposition} \vspace{-0em}\label{prop1} 
	Every limit point of the sequence generated by Algorithm \ref{CCCP_algorithm} is a stationary solution of problem \eqref{multi_ratio_problem}.
\end{proposition}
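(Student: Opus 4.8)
The plan is to follow the standard convergence analysis of the concave--convex procedure, specialized to the fact that here the objective $E_i+\gamma E_e$ is already linear and all the non-convexity has been pushed into constraints written in difference-of-convex (DC) form. First I would verify that feasibility is preserved across iterations. Every non-convex constraint in problem \eqref{equivalent_problem} reads $g(\cdot)-h(\cdot)\le 0$ with $g,h$ convex: the DC splitting $xy=\tfrac12(x+y)^2-\tfrac12 x^2-\tfrac12 y^2$ for the bilinear terms in \eqref{IE_cons3}, \eqref{pow_2}, \eqref{E_e_constraints}, \eqref{non_convex1}, \eqref{non_convex2}; concavity of $\log_2(\cdot)$ for \eqref{info_casu_1}; concavity of $\ln(\cdot)$ together with the SOC term for \eqref{pow_1}; and the analogous splitting of \eqref{energy_cons}. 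In subproblem \eqref{CCCP2_inner} each such constraint is replaced by $g(\cdot)-h(\bm{\mathcal{X}}_l)-\nabla h(\bm{\mathcal{X}}_l)^{T}(\cdot-\bm{\mathcal{X}}_l)\le 0$. Since $h$ is convex, $h(\cdot)\ge h(\bm{\mathcal{X}}_l)+\nabla h(\bm{\mathcal{X}}_l)^{T}(\cdot-\bm{\mathcal{X}}_l)$, so the convexified constraint is a \emph{restriction} of the original one; in particular the incumbent $\bm{\mathcal{X}}_l$ is feasible for \eqref{CCCP2_inner}, and by induction the whole sequence $\{\bm{\mathcal{X}}_l\}$ is feasible for \eqref{equivalent_problem}. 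Because $\bm{\mathcal{X}}_l$ is feasible for the $l$-th subproblem while $\bm{\mathcal{X}}_{l+1}$ is its optimum, the objective is non-decreasing, $E_{i,l+1}+\gamma E_{e,l+1}\ge E_{i,l}+\gamma E_{e,l}$; the power constraints \eqref{power_cons_ori_UAV}, \eqref{power_cons_ori_PB} and the channel models bound all rates and hence bound $E_i,E_e$, so the objective sequence converges and, the feasible set being closed and bounded, the iterates admit limit points.

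The core step is to show that any limit point is a KKT (stationary) point of \eqref{equivalent_problem}. I would record the three surrogate properties at the current point: value consistency (the convexified left-hand side at $\bm{\mathcal{X}}_l$ equals the original constraint value), gradient consistency (its gradient at $\bm{\mathcal{X}}_l$ equals $\nabla(g-h)(\bm{\mathcal{X}}_l)$), and the global upper-bound property established above. Assuming a constraint qualification (Slater/MFCQ) holds for \eqref{CCCP2_inner} along the sequence, I would write its KKT system at $\bm{\mathcal{X}}_{l+1}$, extract a convergent subsequence $\bm{\mathcal{X}}_{l_k}\to\bm{\mathcal{X}}^\star$, argue $\bm{\mathcal{X}}_{l_k+1}\to\bm{\mathcal{X}}^\star$ as well (via continuity of the subproblem solution or by showing the successive difference vanishes along the subsequence), use boundedness of the multipliers from the CQ, and pass to the limit. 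By value and gradient consistency, the limiting stationarity and complementary-slackness relations become exactly the KKT conditions of \eqref{equivalent_problem} at $\bm{\mathcal{X}}^\star$; hence $\bm{\mathcal{X}}^\star$ is a stationary point of \eqref{equivalent_problem}. Finally I would transfer this back to \eqref{multi_ratio_problem}: the reformulations from \eqref{multi_ratio_problem} to \eqref{equivalent_problem} introduce only auxiliary variables and constraints that are tight at optimality (the arguments following \eqref{rate_auxi}, \eqref{E_i_constraints}, \eqref{pow_1}--\eqref{E_e_constraints} and \eqref{rate_1}--\eqref{rate_2}), so the two problems are equivalent and eliminating the auxiliary variables from the KKT system of \eqref{equivalent_problem} at $\bm{\mathcal{X}}^\star$ yields the KKT system of \eqref{multi_ratio_problem} at the corresponding point.

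The main obstacle is the limiting argument in the core step: making it rigorous requires a constraint qualification for the convex subproblems \eqref{CCCP2_inner} to hold \emph{uniformly} along the sequence (so the Lagrange multipliers stay bounded and the KKT conditions survive the passage to the limit), and since \eqref{CCCP2_inner} is a restriction whose feasible set could a priori be thin, one must either assume or verify this regularity for the particular structure at hand. A secondary subtlety, in the translation back to \eqref{multi_ratio_problem}, is to confirm that the auxiliary-variable equivalence does not create spurious stationary points — i.e.\ that a stationary point of \eqref{equivalent_problem} always corresponds to one of \eqref{multi_ratio_problem} rather than to a point at which an auxiliary constraint such as \eqref{rate_1} or \eqref{IE_cons3} is slack — which is handled by the tightness-at-optimality observations noted above.
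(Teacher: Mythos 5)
Your proposal is correct and follows essentially the same route as the paper, which simply delegates the argument to the standard CCCP convergence theory of \cite{CCCP2009}: the inner-approximation/feasibility/monotonicity chain, the passage to the limit in the subproblem KKT systems under a constraint qualification, and the transfer back to \eqref{multi_ratio_problem} via the tightness-at-optimality of the auxiliary constraints are exactly the ingredients that reference supplies. The caveats you flag (uniform constraint qualification along the iterates, and ruling out spurious stationary points of the reformulation) are real assumptions of that theory that the paper leaves implicit, so your sketch is, if anything, more explicit than the paper's one-line citation.
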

\begin{proof}\vspace{-0em}
	Please refer to reference \cite{CCCP2009} for the detailed proof. \vspace{-0em}
\end{proof}
\begin{algorithm}[!h]  \small
	\caption{The Proposed CCCP-based Algorithm} \label{CCCP_algorithm}
	\begin{algorithmic}[1]
		\STATE Initialize with a feasible solution $\bm{\mathcal{X}}_0$ and set $l = 0$.
		\REPEAT
		\STATE Solve problem \eqref{CCCP2_inner} with fixed $\bm{\mathcal{X}}_l$ and assign the solution to $\bm{\mathcal{X}}_{l+1}$.
		\STATE Update the iteration index: $l = l + 1$.
		\UNTIL{some convergence condition is met.}
	\end{algorithmic}  
\end{algorithm}
\vspace{-0em}
Furthermore, the computational complexity of Algorithm \ref{CCCP_algorithm} is dominated by solving problem \eqref{CCCP2_inner} $L$ times, where $L$ denotes the total iteration number. Since problem \eqref{CCCP2_inner} involves $7N+1$ linear constraints, $5N-1$ SOCs with dimension $3$, $3N-2$ SOCs with dimension $4$ and the number of variables $n$ is on the order of $\mathcal{O}(11N)$, we can see that the complexity of Algorithm 1 is on the order of $\mathcal{O}(11NL\sqrt{23N-5} (198N^2+96N-37))$ according to the basic elements of complexity analysis as used in \cite{Wang2014}. Therefore, by letting $N \rightarrow \infty$, the worst-case asymptotic complexity of Algorithm \ref{CCCP_algorithm} can be evaluated as $\mathcal{O}(LN^{3.5})$.

\vspace{-0.5em}
\section{The Proposed PDD-based Algorithm} \label{sec_PDD}
In the previous section, we proposed the CCCP-based algorithm (i.e., Algorithm \ref{CCCP_algorithm}), where in each iteration, an SOCP problem is required to be solved. The main idea is to replace the complex objective function and constraints with simpler ones and possibly with some linear approximations, thus employing convex solvers is inevitable. However, since the intrinsic structure of problem \eqref{multi_ratio_problem} may not be fully exploited, off-the-shelf software solvers might be inefficient in many scenarios.
In this section, we take an alternative by embracing the PDD framework and present a PDD-based algorithm. Specifically, we first transform problem \eqref{multi_ratio_problem} into an equivalent form by introducing auxiliary variables and some additional equality constraints. Different from Algorithm \ref{CCCP_algorithm}, in this case, our aim is to make this problem fully decomposable, i.e., to relief the coupling of the constraints. Then, instead of directly handling the equivalent problem with many constraints, we focus on its augmented Lagrangian (AL) problem, where the equality constrains are augmented onto the objective function with certain dual variables and a penalty parameter. As a result, we obtain a twin-loop PDD-based algorithm, where the inner loop seeks to (approximately) solve the AL problem using a block minimization technique, while the outer loop updates the dual variables and the penalty parameter. Especially, we show that each subproblem can be solved either in closed-form or by the bisection method. 

\vspace{-0em}
\subsection{Problem Transformation}
Firstly, we introduce the following variable substitutions:
\begin{subequations} \small
\begin{align}
& H^2+\|\mathbf{q}_n - \mathbf{q}_D\|^2 =  d_n^D, \; H^2+\|\mathbf{q}_n - \mathbf{q}_S\|^2 = d_n^S,\; H^2 + \|\mathbf{q}_n - \mathbf{q}_P\|^2 = \left(\frac{\ln t_n}{\alpha}\right)^2 ,\label{traj3}\\
& p_n^r\gamma_0 = s_n^r d_n^D, \; p_n^s\gamma_0 = s_n^s d_n^S,\label{traj5}\\
& t_n P_n^s = \hat{t}_n,\label{traj6}
\end{align}
\end{subequations}
where the purposes of $s_n^r$, $s_n^s$, $d_n^D$, $d_n^S$, $t_n$ and $\hat{t}_n$ are similar to those in Section \ref{problem_transformation_CCCP}, only in this case, we prefer to directly introduce equality constraints such that the PDD framework can be naturally blended in.

Next, since the trajectory variables $\{\mathbf{q}_n\}$ are coupled in the velocity vectors and appear multiple times in \eqref{traj3}, in order to break these couplings, we further introduce four redundancy copies, i.e., $\dot{\mathbf{q}}_n = \mathbf{q}_n$, $\bar{\mathbf{q}}_n = \mathbf{q}_n$, $\hat{\mathbf{q}}_n = \mathbf{q}_n$, $\tilde{\mathbf{q}}_n = \bar{\mathbf{q}}_n$. Let $\bar{v}_{n} = {\|\tilde{\mathbf{q}}_{n+1} - \mathbf{q}_n\|^2}/{\delta_t^2}$ and $ \tilde{v}_{m+1} = \sum\limits_{n=1}^m \bar{v}_n$ represent the squared velocity at slot $n$ and the sum of squared velocity from slot $1$ to $m$ and introduce $\tilde{v}_m = \dot{v}_m$, $\dot{v}_m = \breve{v}_m$ (due to the same reason with that of $\{\mathbf{q}_n\}$). Then, it can be seen that ${\|\tilde{\mathbf{q}}_{n+1} - \mathbf{q}_n\|^2}/{\delta_t^2} = \breve{v}_{n+1} - \tilde{v}_n$ holds.

Finally, in order to decompose the information-causality and energy-causality constraints, the following auxiliary variables are employed:
\begin{subequations} \label{PDD_cons_1} \small
\begin{align}
& \log_2\left(1 + s_n^r\right) = \bar{s}_n^r, \; \log_2\left(1 + s_n^s\right) = \bar{s}_n^s,\label{PDD_cons1_2}\\
& \sum\limits_{n=2}^{m}  \bar{s}_n^r - \sum\limits_{n=1}^{m-1} \bar{s}_n^s = \tilde{s}_m,\label{PDD_cons1_3}\\
& \ln(t_n)/\alpha = t_n^L,\label{PDD_cons1_4}\\
& \breve{t}_{n}=a_1a_2\hat{t}_n+a_2b_1 t_n,\label{PDD_cons1_5}\\
& - \sum\limits_{n=1}^m \kappa \bar{v}_n + \Big(\sum\limits_{i=1}^m \breve{t}_i + m b_2\Big)\delta_t  = e_m,\label{PDD_cons1_6}
\end{align}
\end{subequations}
where the main motivation is to make these coupling constraints separable among each other and among different time slots. Therefore, we have the following optimization problem:
\begin{subequations} \label{PDD_problem5}  \small
	\begin{align}
	&\mathop {\max }\limits_{\bm{\mathcal{Y}} }\;  \hat{f}_{\textrm{EE}} (\{p_n^s,p_n^r,\bar{s}_n^r\})+ \gamma \hat{f}_{\textrm{PE}} (\{t_n,\hat{t}_n,P_n^s\})\\
	&\textrm{s.t.}\;   \tilde{s}_m \leq 0,\; m \in \mathcal{N}\backslash\{1\}, \label{PDD_problem_cons1} \\
	& \sum\limits_{n=2}^{N} \bar{s}_n^r  \geq R_{\textrm{sum}},\label{PDD_problem_cons2}\\
	& \mathcal{E} \geq \mathcal{E} +e_m  \geq \theta,\; m \in \mathcal{N},\label{PDD_problem_cons3}\\
	& H^2+\|\dot{\mathbf{q}}_n - \mathbf{q}_D\|^2 = d_n^D,\;
	 H^2+\|\bar{\mathbf{q}}_n - \mathbf{q}_S\|^2 = d_n^S,\;
	 H^2 + \|\hat{\mathbf{q}}_n - \mathbf{q}_P\|^2 = \left(t_n^L\right)^2,\;n \in \mathcal{N} \label{PDD_problem_cons6}\\
	& {\|\tilde{\mathbf{q}}_{n+1} - \mathbf{q}_{n}\|^2}/{\delta_t^2}  =\breve{v}_{n+1} - \tilde{v}_n,\;n \in \mathcal{N}, \label{PDD_problem_cons7}\\
	& \breve{v}_{n+1} - \tilde{v}_n = \bar{v}_n,\; \breve{v}_{n} = \dot{v}_n,\;\tilde{v}_{n} = \dot{v}_n,\; n \in \mathcal{N},\label{v_variables}\\
	& \bar{v}_n\leq v_{\textrm{max}}^2,\label{PDD_problem_cons8} \\
	& \dot{\mathbf{q}}_n = \mathbf{q}_n,\;
     \bar{\mathbf{q}}_n = \mathbf{q}_n,\;
	 \hat{\mathbf{q}}_n = \mathbf{q}_n,\; 
	 \tilde{\mathbf{q}}_n = \bar{\mathbf{q}}_n,\; n \in \mathcal{N}, \label{q_variables}\\
	 & \eqref{mobility_cons1_sub2},\; \eqref{power_cons_ori_UAV},\;\eqref{power_cons_ori_PB},\;\eqref{traj5},\;\eqref{traj6},\;\eqref{PDD_cons_1}, \notag
	\end{align}
\end{subequations}
where $\bm{\mathcal{Y}} \triangleq \{  \mathbf{q}_n,p_n^s,p_n^r,P_n^s,d_n^S,d_n^D,t_n,\hat{t}_n,\breve{t}_n,t_n^L,s_n^s,s_n^r,\bar{s}_n^s,\bar{s}_n^r,\dot{\mathbf{q}}_n, \bar{\mathbf{q}}_n,\hat{\mathbf{q}}_n,\tilde{\mathbf{q}}_n,\bar{v}_n,\breve{v}_n,\tilde{v}_n,\dot{v}_n,\tilde{s}_m, e_m\} $,
\begin{equation} \small
\hat{f}_{\textrm{EE}} (\{p_n^s,p_n^r,\bar{s}_n^r\}) \triangleq {\sum\limits_{n = 2}^{N}  \bar{s}_n^r}\Big{/}\left(\upsilon^s\sum\limits_{n = 1}^{N-1} p_n^s + \upsilon^r\sum\limits_{n = 2}^{N} p_n^r + N P_{\textrm{on}}\right),
\end{equation} 
\begin{equation} \small
\hat{f}_{\textrm{PE}} (\{t_n,\hat{t}_n,P_n^s\}) \triangleq \left(\sum\limits_{n=1}^N a_1a_2\hat{t}_n+a_2b_1 t_n+b_2\right)\Big{/}{\sum\limits_{n=1}^N P_n^s}.
\end{equation} 
Note that problem \eqref{PDD_problem5} and \eqref{multi_ratio_problem} are equivalent, since to this end, we are basically introducing equality constraints. The roles and necessities of these additional variables and constraints would be clear in the next subsection.

\vspace{-0em}
\subsection{Algorithm Design}
In this subsection, our aim is to solve problem \eqref{PDD_problem5} by proposing an efficient PDD-based algorithm. We first formulate the AL problem of \eqref{PDD_problem5} as follows:
\begin{subequations} \label{AL}  \small
	\begin{align}
	&\mathop {\max }\limits_{\bm{\mathcal{Y}} } \; \hat{f}_{\textrm{EE}} (\{p_n^s,p_n^r,\bar{s}_n^r\})+ \gamma \hat{f}_{\textrm{PE}} (\{t_n,\hat{t}_n,P_n^s\}) - f_{\textrm{AL}}(\bm{\mathcal{Y}}, \bm{\Lambda})\\
	&\textrm{s.t.}\; \eqref{mobility_cons1_sub2},\; \eqref{power_cons_ori_UAV},\;\eqref{power_cons_ori_PB},\;\eqref{PDD_problem_cons1}-\eqref{PDD_problem_cons8},
	\end{align}
\end{subequations}
where $f_{\textrm{AL}}(\bm{\mathcal{Y}}, \bm{\Lambda})$ represents the AL part which is obtained by augmenting the equality constraints with certain dual variables and penalty functions.\footnote{For example, consider an equality constraint $\mathbf{x}=\mathbf{y}$, the corresponding AL part can be expressed as $f_{\textrm{AL}}(\mathbf{x},\mathbf{y}, \bm{\lambda}) = \frac{1}{2\rho} \|\mathbf{x}-\mathbf{y}+\rho \bm{\lambda}\|^2$, where $\bm{\lambda}$ denotes the dual variable and $\rho$ is the penalty parameter. In this work, since the exact expression of $f_{\textrm{AL}}(\bm{\mathcal{Y}}, \bm{\Lambda})$ is kind of tedious, we omit it for brevity but its components will be presented in the following.} $\bm{\Lambda}$ denotes the collection of all dual variables, which is listed in Table \ref{tab:dual_variables} with their corresponding equality constraints.
\vspace{-0em}
\begin{table}[htbp] \small
	\renewcommand{\arraystretch}{1.4}
	\centering
	\caption{A List of Introduced Dual Variables} \vspace{-0em}
	\begin{tabular}{|c|c|c|c|c|c|c|c|}
		\hline
		Constraints & \eqref{traj5} &\eqref{traj6} and \eqref{PDD_cons1_4}& \eqref{PDD_cons1_2} & \eqref{PDD_cons1_3} and \eqref{PDD_cons1_6} & \eqref{PDD_cons1_5} &  \eqref{v_variables} & \eqref{q_variables}\\
		\hline
		Dual Variables & $ \mu_n^D,  \mu_n^S$ & $\xi_n^S, \xi_n^L$  & $\zeta_n^r,\zeta_n^s$ & $\zeta_m^i, \zeta_m^e$ & $\tilde{\eta}_n$ & $\bar{\tau}_n, \tau_n,\tilde{\tau}_n$  & $\dot{\bm{\lambda}}_n, \bar{\bm{\lambda}}_n, \hat{\bm{\lambda}}_n, \tilde{\bm{\lambda}}_n$\\
		\hline
	\end{tabular}
	\label{tab:dual_variables} \vspace{-0em}
\end{table} 

Next, we propose to divide the optimization variables $\bm{\mathcal{Y}}$ into the following groups: $\{s_n^s,s_n^r\}$, $\{\bar{s}_n^s,\bar{s}_n^r\}$, $\{p_n^s,p_n^r\}$, $\{\tilde{s}_m,e_m\}$, $\{\bar{\mathbf{q}}_n, d_n^S, \hat{\mathbf{q}}_n,t_n^L,\dot{\mathbf{q}}_n,d_n^D\}$, $\{\mathbf{q}_{n},\tilde{\mathbf{q}}_{n+1},\breve{v}_{n+1}, \tilde{v}_n\}$, $\{\hat{t}_n, t_n, \breve{t}_n\}$ and $\{\dot{v}_n, \bar{v}_n,$ $  P_n^s\}$, and iteratively solve problem \eqref{AL} by employing the block successive upper-bound minimization (BSUM) method \cite{Hong2016}.\footnote{For simplicity, some of the notations are reused in these blocks and we note that the definitions of these notations are only valid in the current block.}

%Divide into the following groups, $\{s_n^s,s_n^r\}$, $\{\bar{s}_n^s,\bar{s}_n^r\}$, , , $\{\bar{\mathbf{q}}_n, d_n^S, \hat{\mathbf{q}}_n,t_n^L,\dot{\mathbf{q}}_n,d_n^D\}$,  $\{\mathbf{q}_{2n},\tilde{\mathbf{q}}_{2n+1},\breve{v}_{2n+1}, \tilde{v}_{2n} \}$, $\{\mathbf{q}_{2n+1},\tilde{\mathbf{q}}_{2n+2}, \breve{v}_{2n+2}, \tilde{v}_{2n+1}\}$, $\{\dot{v}_n\}$, $\{\hat{t}_n,{t}_n\}$, $\{\tilde{t}_n, \breve{t}_n\}$, $\{\bar{v}_n, P_n^s\}$.

\textbf{\emph{1) Block $\{s_n^s,s_n^r\}$}}: we have the following problem:
\begin{equation} \label{subproblem_1}  \small
	\begin{array}{l}
	\mathop {\min }\limits_{\{ s_n^s,\; s_n^r\} } \;
	 \sum\limits_{n\in\mathcal{N}}(p_n^r\gamma_0 - s_n^r d_n^D + \rho \mu_n^D)^2
	 + \sum\limits_{n\in\mathcal{N}}(p_n^s\gamma_0 - s_n^s d_n^S + \rho \mu_n^S)^2 \\
	 + \sum\limits_{n\in\mathcal{N}\backslash\{1\}}\left(\log_2\left(1 + s_n^r\right) - \bar{s}_n^r+ \rho \zeta_n^r\right)^2
	 + \sum\limits_{n\in\mathcal{N}\backslash\{N\}}\left(\log_2\left(1 + s_n^s\right) - \bar{s}_n^s+ \rho \zeta_n^s\right)^2  \\
	 \textrm{s.t.} \; s_n^r \geq 0,\;s_n^s \geq 0,\;{s}_1^r = 0,\; {s}_N^s = 0.
	\end{array}
\end{equation}
It can be observed that the optimization of $s_n^r$ and $s_n^s$ is separable and their updates for different time slot $n$ can be proceeded in parallel. Since the objective of problem \eqref{subproblem_1} is non-convex, one may need to employ the fixed-point method to directly solve it. In this work, we take an alternative by minimizing an approximate function of the objective and the solution can be obtained in closed-form, the details are relegated to Appendix \ref{appendix_subproblem1}.

%The update of $s_n^r$ and $s_n^s$ can be proceeded in parallel.
%Take derivative with respect to $s_n^r$, we have 
%\begin{equation}
%\begin{array}{l}
%f'(s_n^r) = -d_n^D(p_n^r\gamma_0 - s_n^r d_n^D + \rho \mu_n^D) + \frac{1}{\ln2(1+ s_n^r)}\left(\log_2\left(1 + s_n^r\right) - \bar{s}_n^r+ \rho \zeta_n^r\right) = 0,\\
%f''(s_n^r) = (d_n^D)^2 + \frac{1-\ln2\left(\log_2\left(1 + s_n^r\right) - \bar{s}_n^r+ \rho \zeta_n^r\right)}{(\ln 2)^2(1+s_n^r)^2}.
%\end{array}
%\end{equation}
%Since $d_n^D$ is typically very large, thus $f''(s_n^r)>0$ almost always holds. As a result, the root of the function $f'(s_n^r)$ can be found by the fixed-point method. Therefore, the optimal solution of problem \eqref{subproblem_1} can be efficiently found.

\textbf{\emph{2) Block $\{\bar{s}_n^s,\bar{s}_n^r\}$}}: the corresponding optimization problem can be expressed as
\begin{equation} \label{subproblem_2}  \small
	\begin{array}{l}
	\mathop {\min }\limits_{\{ \bar{s}_n^s,\; \bar{s}_n^r\} } \;-\hat{f}_{\textrm{EE}} (\{p_n^s,p_n^r,\bar{s}_n^r\})
	 + \frac{1}{2\rho}\sum\limits_{n\in\mathcal{N}\backslash\{1\}}\left(\log_2\left(1 + s_n^r\right) - \bar{s}_n^r+ \rho \zeta_n^r\right)^2\\
	  + \frac{1}{2\rho}\sum\limits_{n\in\mathcal{N}\backslash\{N\}}\left(\log_2\left(1 + s_n^s\right) - \bar{s}_n^s+ \rho \zeta_n^s\right)^2
	   + \frac{1}{2\rho}\sum\limits_{m\in\mathcal{N}\backslash\{1\}}\left(\sum\limits_{n=2}^{m}  \bar{s}_n^r - \sum\limits_{n=1}^{m-1} \bar{s}_n^s - \tilde{s}_m+ \rho \zeta_m^i\right)^2 \\
	\textrm{s.t.}\; \eqref{PDD_problem_cons2},\; \bar{s}_1^r = 0,\; \bar{s}_N^s = 0,
	\end{array}
\end{equation}
which is convex. It can be easily verified that problem \eqref{subproblem_2} satisfies the Slater's condition \cite{ConvexOptimization}, therefore strong duality holds for \eqref{subproblem_2} and it can be globally solved by resorting to its Lagrangian dual problem. Specifically, a closed-form solution can be derived and the details are demonstrated in Appendix \ref{appendix_subproblem2}.

\textbf{\emph{3) Block $\{p_n^s,p_n^r\}$}}: we have the following problem:
\begin{equation} \label{subproblem_3}  \small
\begin{array}{l}
\mathop {\min }\limits_{\{ p_n^s,\;p_n^r\} } -\hat{f}_{\textrm{EE}} (\{p_n^s,p_n^r,\bar{s}_n^r\})
 + \frac{1}{2\rho}\sum\limits_{n\in\mathcal{N}}\left((p_n^r\gamma_0 - s_n^r d_n^D + \rho \mu_n^D)^2
+ (p_n^s\gamma_0 - s_n^s d_n^S + \rho \mu_n^S)^2\right)\\
\textrm{s.t.}\;  \eqref{power_cons_ori_UAV}.
\end{array}
\end{equation}
Since $\hat{f}_{\textrm{EE}} (\{p_n^s,p_n^r,\bar{s}_n^r\})$ is convex with respect to $ p_n^s$ and $p_n^r$, the objective function of problem \eqref{subproblem_3} is in DC form. Thus, by employing the BSUM method, the updates of these variables can also be conducted in closed-form, which is detailed in Appendix \ref{appendix_subproblem3}.

\textbf{\emph{4) Block $\{\tilde{s}_m,e_m\}$}}: this subproblem can be written as
\begin{equation} \label{subproblem_4}  \small
	\begin{array}{l}
	\mathop {\min }\limits_{\{\tilde{s}_m, e_m\} } 
	\sum\limits_{m\in\mathcal{N}\backslash\{1\}}\left(\sum\limits_{n=2}^{m}  \bar{s}_n^r - \sum\limits_{n=1}^{m-1} \bar{s}_n^s - \tilde{s}_m+ \rho \zeta_m^i\right)^2\\
	 +\sum\limits_{m\in\mathcal{N}}\left(- \kappa \tilde{v}_{m+1} +\Big( \sum\limits_{i=1}^m \breve{t}_i + m b_2\Big)\delta_t  - e_m+ \rho \zeta_m^e\right)^2\\
	\textrm{s.t.}\;  \eqref{PDD_problem_cons1},\;\eqref{PDD_problem_cons3}.
	\end{array}
\end{equation}
Due to the convexity of problem \eqref{subproblem_4}, It can be readily seen that its optimal solution can be obtained by
$ \tilde{s}_m = \Pi_{(-\infty,0]} \Big(\sum\limits_{n=2}^{m}  \bar{s}_n^r - \sum\limits_{n=1}^{m-1} \bar{s}_n^s + \rho \zeta_m^i\Big)$ and $e_m = \Pi_{[\theta-\mathcal{E},0]}\Big(- \kappa \tilde{v}_{m+1} + \big(\sum\limits_{i=1}^m \breve{t}_i  + m b_2\big) \delta_t+ \rho \zeta_m^e\Big)$.

\textbf{\emph{5) Block $\{\bar{\mathbf{q}}_{n}, d_n^S, \hat{\mathbf{q}}_{n}, t_n^L, \dot{\mathbf{q}}_n,d_n^D\}$}}: in this case, we can observe that the variables $\{\hat{\mathbf{q}}_{n}, t_n^L\}$, $\{\bar{\mathbf{q}}_{n}, d_n^S\}$ and $\{\dot{\mathbf{q}}_n,d_n^D\}$ are already decoupled both in the objective function and the constraints, and the optimization for each slot $n$ can be proceed in parallel. Moreover, the optimization problems of these three sub-blocks exhibit a similar structure, i.e., they are all quadratically constrained quadratic programs with only one constraint (QCQP-1). Therefore, these three subproblems can be globally solved to their optimal solutions. Consider the optimization of $\{\bar{\mathbf{q}}_{n}, d_n^S\}$, we have the following problem:
\begin{equation} \label{subproblem_5}  \small
	\begin{array}{l}
	\mathop {\min }\limits_{\{ \bar{\mathbf{q}}_n,d_n^S\} }  \sum\limits_{n\in\mathcal{N}}(\bar{\mathbf{q}}_n - \mathbf{q}_n + \rho \bar{\bm{\lambda}}_n)^2+
	\sum\limits_{n\in\mathcal{N}}(\tilde{\mathbf{q}}_n - \bar{\mathbf{q}}_n + \rho \tilde{\bm{\lambda}}_n)^2+ \sum\limits_{n\in\mathcal{N}}(p_n^s\gamma_0 - s_n^s d_n^S + \rho \mu_n^S)^2\\
	\textrm{s.t.}\;  H^2+\|\bar{\mathbf{q}}_n - \mathbf{q}_S\|^2 = d_n^S,
	\end{array}
\end{equation}
whose optimal solution and the corresponding derivation are detailed in Appendix \ref{appendix_subproblem5}. The optimization of the other two sub-blocks can be similarly addressed, and thus they are omitted here for brevity.

\textbf{\emph{6) Block $\{\mathbf{q}_{n},\tilde{\mathbf{q}}_{n+1},\breve{v}_{n+1}, \tilde{v}_n\}$}}: the following optimization problem can be obtained:
\begin{equation} \label{subproblem_6}  \small
	\begin{array}{l}
	\mathop {\min }\limits_{\{\mathbf{q}_{n},\tilde{\mathbf{q}}_{n+1}, \breve{v}_{n+1}, \tilde{v}_n\} } \;   \sum\limits_{n\in\mathcal{N}} \left(\|\dot{\mathbf{q}}_{n} - \mathbf{q}_{n} + \rho \bar{\bm{\lambda}}_{n}\|^2 + \|\bar{\mathbf{q}}_{n} - \mathbf{q}_{n} + \rho \bar{\bm{\lambda}}_{n}\|^2
	 + \|\hat{\mathbf{q}}_{n} - \mathbf{q}_{n} + \rho \hat{\bm{\lambda}}_{n}\|^2 \right) \\
	 + \sum\limits_{n\in\mathcal{N}-1}\|\tilde{\mathbf{q}}_{n+1} - \bar{\mathbf{q}}_{n+1} + \rho \tilde{\bm{\lambda}}_{n+1}\|^2 + \sum\limits_{n\in\mathcal{N}}\left(- \kappa \tilde{v}_{n+1} + \Big(\sum\limits_{i=1}^n \breve{t}_i +n b_2\Big)\delta_t  - e_n+ \rho \zeta_n^e\right)^2\\
	 +  \sum\limits_{n\in\mathcal{N}-1} \left( \breve{v}_{n+1} - \dot{v}_{n+1} +\rho \tau_{n+1} \right)^2 +  \sum\limits_{n\in\mathcal{N}} \left( \left( \tilde{v}_{n} - \dot{v}_{n} +\rho \tilde{\tau}_n \right)^2+\left( \breve{v}_{n+1} - \tilde{v}_{n} -\bar{v}_n+\rho \bar{\tau}_n \right)^2  \right)
	 \\
	\textrm{s.t.}\;  \frac{\|\tilde{\mathbf{q}}_{n+1} - \mathbf{q}_{n}\|^2}{\delta_t^2}  =\breve{v}_{n+1} - \tilde{v}_n,\;\forall n,\\
	\end{array}
\end{equation}
which is also a QCQP-1 problem when restricting to one particular $n$. Therefore, the method proposed in Appendix \ref{appendix_subproblem5} can be easily modified to solve problem \eqref{subproblem_6}. However, in this block, three special cases need to be considered: 1) when $n=1$, we set $\mathbf{q}_1 = \mathbf{q}_I$ and the other variables can be obtained by solving the resulting problem; 2) when $n=N-1$, $\tilde{\mathbf{q}}_{n+1} = \mathbf{q}_F$ should be satisfied; 3) when $n=N$, we set $\mathbf{q}_N = \mathbf{q}_F$ and the optimization problem of $\tilde{v}_N$ can be expressed as
\begin{equation}  \label{subproblem_6_1} \small
	\begin{array}{l}
	\mathop {\min }\limits_{\tilde{v}_N }    
	\left( \breve{v}_{N} - \tilde{v}_{N} +\rho \tau_N \right)^2 + \left(- \kappa \tilde{v}_{N} + (\tilde{t}_N + (N-1)b_2)\delta_t  - e_{N-1}+ \rho \zeta_{N-1}^e\right)^2\\
	+ \left(- \kappa \tilde{v}_{N} +( \tilde{t}_{N+1}+ N b_2)\delta_t  - e_{N}+ \rho \zeta_{N}^e\right)^2. 
	\end{array}
\end{equation}
which is an unconstrained quadratic program (QP) and can be easily solved.

\textbf{\emph{7) Block $\{\hat{t}_n, t_n, \breve{t}_n\}$}}: in this block, since $\hat{t}_n$, $t_n$ and $\breve{t}_n$ are coupled in the objective function of problem \eqref{AL}, we propose to optimize them using the one-iteration block coordinate descent (BCD) method and some proper approximations are employed when necessary. Specifically, for $t_n$, we have the following non-convex problem:
\begin{equation}  \label{block7_1} \small
\begin{array}{l}
\mathop {\min }\limits_{\{ t_n\} }\; - \gamma \frac{a_2 b_1 t_n }{\sum\limits_{n=1}^N P_n^s}  + \frac{1}{2\rho}\sum\limits_{n\in\mathcal{N}}(t_n P_n^s - \hat{t}_n + \rho \xi_n^S)^2\\
+ \frac{1}{2\rho}\sum\limits_{n\in\mathcal{N}} \left( \ln(t_n) - \alpha t_n^L +\rho \xi_n^L \right)^2
+ \frac{1}{2\rho}\sum\limits_{n \in\mathcal{N}} (\breve{t}_{n}-(a_1a_2\hat{t}_n+a_2b_1 t_n)+\rho\tilde{\eta}_n)^2.
\end{array}
\end{equation}
Since $e^{-\alpha \sqrt{H^2 + \|\mathbf{q}_n - \mathbf{q}_P\|^2}}  = t_n$, we can infer that $e^{-\alpha \sqrt{d_{\textrm{max}}}} \leq t_n \leq e^{-\alpha \sqrt{H^2}}$ must be satisfied, where $d_{\textrm{max}}$ denotes the maximum squared distance between the UAV and the PB, which can be obtained by $d_{\textrm{max}} = \max(\|\mathbf{q}_S-\mathbf{q}_P\|^2, \|\mathbf{q}_D-\mathbf{q}_P,\|\mathbf{q}_I-\mathbf{q}_P\|^2,\|\mathbf{q}_F-\mathbf{q}_P\|^2)$. Consequently, according to a similar derivation as in \textbf{\emph{Block 1}}, problem \eqref{block7_1} can be approximated by
\begin{equation} \label{subproblem_t} \small
\min\limits_{t_n}\; a t_n^2+b t_n, 
\end{equation}
where $a = \frac{1}{2\rho}  (P_n^s)^2+ \frac{1}{2\rho}  (a_2 b_1)^2 + \frac{\phi}{2\rho}$, $ b = -\gamma\frac{ a_2 b_1}{\sum\limits_{n=1}^N P_n^s} + \frac{1}{\rho} P_n^s(\rho \xi_n^S-\hat{t}_n)+ \frac{1}{\rho}\left(\ln(\tilde{t}_n) - \alpha t_n^L +\rho \xi_n^L\right)\frac{1}{\tilde{t}_n}  
+ \frac{1}{\rho} (a_1a_2\hat{t}_n- \breve{t}_{n}- \rho \tilde{\eta}_n)a_2b_1 - \frac{ \phi}{\rho} \tilde{t}_n $, $\tilde{t}_n$ denotes the value of $t_n$ in the previous iteration and $\phi = \frac{1- (\ln(e^{-\alpha \sqrt{d_{\textrm{max}}}}) - \alpha t_n^L+\rho \xi_n^L)}{(e^{-\alpha \sqrt{d_{\textrm{max}}}})^2}$. The optimal solution to problem \eqref{subproblem_t} is $t_n = -0.5b/a$. As for the optimization of $\hat{t}_n$ and $\breve{t}_n$,  we only need to solve two unconstrained QPs, which can be done without much difficulty.

\textbf{\emph{8) Block $\{\dot{v}_n, \bar{v}_n,  P_n^s\}$}}: in this case, the variables $\dot{v}_n$, $\bar{v}_n$ and $P_n^s$ are mutually separable and independent of each other. To be specific, the subproblems with respect to $\dot{v}$ and $\bar{v}_n$ can be expressed as
\begin{equation}  \small
\mathop {\min }\limits_{\{ \dot{v}_n\} } \;  \sum\limits_{n\in\mathcal{N}} \left( \breve{v}_{n} - \dot{v}_{n} +\rho \tau_n \right)^2 + \sum\limits_{n\in\mathcal{N}} \left( \tilde{v}_{n} - \dot{v}_{n} +\rho \tilde{\tau}_n \right)^2,
\end{equation}
\begin{equation} \label{block8_1} \small
\begin{array}{l}
\min\limits_{\bar{v}_n} \; \sum\limits_{n\in\mathcal{N}} \left( \breve{v}_{n+1} - \tilde{v}_{n} -\bar{v}_n+\rho \bar{\tau}_n \right)^2\\
\textrm{s.t.}\;  \bar{v}_{n}\leq v_{\textrm{max}}^2,\; \forall n.
\end{array}
\end{equation}
Their optimal solutions can be obtained by $\dot{v}_n = (\breve{v}_{n}+\rho \tau_n+\tilde{v}_{n} +\rho \tilde{\tau}_n)/2$ and $\bar{v}_n=\Pi_{[0, v_{\textrm{max}}^2]}(\breve{v}_{n+1} - \tilde{v}_{n} +\rho \bar{\tau}_n)$, respectively.
Finally, the optimization problem of $P_n^S$ can be written as
\begin{equation}  \label{block8_2} \small
	\begin{array}{l}
	\mathop {\min }\limits_{\{ P_n^s\} } \; - \gamma \hat{f}_{\textrm{PE}} (\{t_n,\hat{t}_n,P_n^s\})+ \frac{1}{2\rho}\sum\limits_{n\in\mathcal{N}}(t_n P_n^s - \hat{t}_n + \rho \xi_n^S)^2\\
	\textrm{s.t.}\;  P_{\textrm{min}}^s \leq P_n^s \leq P_{\textrm{max}}^s,\forall n.
	\end{array}
\end{equation}
Let $\mathbf{x} = [P_1^s,\cdots,P_N^s]^T$, it can be seen that $ \gamma \hat{f}_{\textrm{PE}} (\{t_n,\hat{t}_n,P_n^s\})$ is jointly concave over the variables in $\mathbf{x}$, therefore the objective function of  \eqref{block8_2} is a DC function with respect to $\mathbf{x}$. The detailed procedure to solve problem \eqref{block8_2} is relegated to Appendix \ref{appendix_subproblem8}.

Besides, the dual variables can be updated according to ${\bm{\lambda}} = {\bm{\lambda}} + \frac{1}{\rho}(\mathbf{x}-\mathbf{y})$, where $\mathbf{x}=\mathbf{y}$ denotes a toy example of the equality constraint in Table \ref{tab:dual_variables} and ${\bm{\lambda}}$ denotes the corresponding dual variable. To summarize, the proposed PDD-based algorithm is shown in Algorithm \ref{PDD_algorithm}. As for its convergence property, we have the following proposition:
\begin{proposition}  \vspace{-0em}\label{prop2}
	Every limit point of the sequence generated by Algorithm \ref{PDD_algorithm} is a stationary solution of problem \eqref{multi_ratio_problem}.
\end{proposition}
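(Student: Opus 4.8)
The plan is to derive this result from the general convergence theory of the PDD framework \cite{ShiPDD2017} combined with that of the block successive upper-bound minimization (BSUM) method \cite{Hong2016}, after checking that problem \eqref{PDD_problem5} and the block updates of Algorithm \ref{PDD_algorithm} satisfy the requisite assumptions. The first step is to recall that \eqref{PDD_problem5} is obtained from \eqref{multi_ratio_problem} solely by introducing auxiliary variables together with equality constraints, so the two problems are equivalent; in particular, eliminating the auxiliary and copy variables via their defining equalities turns the KKT system of \eqref{PDD_problem5} into that of \eqref{multi_ratio_problem}. Hence it suffices to show that every limit point of the iterates is a KKT point of \eqref{PDD_problem5} (the objective is continuously differentiable on the feasible set because $P_n^s \ge P_{\textrm{min}}^s > 0$ and the on-power term keeps the EE denominator positive).

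The second step is to verify convergence of the inner loop, i.e., that for fixed dual vector $\bm{\Lambda}$ and penalty parameter $\rho$, the BSUM sweep over the eight blocks converges to a stationary point of the AL problem \eqref{AL}. Here the key structural fact is that, thanks to the copy variables, the feasible set of \eqref{AL} is the Cartesian product of the per-block constraint sets, each of which is closed and convex (boxes, half-spaces, single-quadratic-constraint sets, or equalities absorbed into the objective). For the blocks updated by a global solve (e.g. $\{\tilde{s}_m,e_m\}$, $\{\bar{\mathbf{q}}_n,d_n^S,\hat{\mathbf{q}}_n,t_n^L,\dot{\mathbf{q}}_n,d_n^D\}$, $\{\mathbf{q}_n,\tilde{\mathbf{q}}_{n+1},\breve{v}_{n+1},\tilde{v}_n\}$, $\{\dot{v}_n,\bar{v}_n,P_n^s\}$ via the QCQP-1 and closed-form solutions of the appendices) the BSUM assumptions hold trivially, while for the blocks updated through a surrogate ($\{s_n^s,s_n^r\}$, $\{\bar{s}_n^s,\bar{s}_n^r\}$, $\{p_n^s,p_n^r\}$, $\{\hat{t}_n,t_n,\breve{t}_n\}$) one must check that the surrogate is a valid majorizer, i.e., it upper-bounds the block objective on the block's feasible set and is tight in value and gradient at the current iterate. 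For the DC parts this holds because the concave part is linearized, and for the quadratic/fractional parts it follows from the constructions of Appendices \ref{appendix_subproblem1}--\ref{appendix_subproblem8}. Then the convergence theory of \cite{Hong2016} implies that every limit point of the inner iterates is a block-coordinatewise minimizer, hence a stationary point of \eqref{AL}; terminating the inner loop with an accuracy $\epsilon_l \downarrow 0$ produces an asymptotically stationary sequence.

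The third step is to invoke the outer-loop convergence result of \cite{ShiPDD2017}, whose hypotheses are: (i) the inner loop returns points satisfying the stationarity conditions of \eqref{AL} up to a tolerance that vanishes along the outer iterations (established above); (ii) the penalty-parameter update rule driven by the constraint-violation measure guarantees that the equality-constraint residual tends to zero along the generated sequence; and (iii) a constraint qualification holds at the limit point. For (iii) we exploit the fact that every coupling equality constraint has the form $(\textrm{copy variable}) = (\textrm{expression in the remaining variables})$, e.g. $\dot{\mathbf{q}}_n=\mathbf{q}_n$, $\tilde{v}_n=\dot{v}_n$, $\breve{v}_{n+1}-\tilde{v}_n=\bar{v}_n$, $\bar{s}_n^r=\log_2(1+s_n^r)$, and the definitions in \eqref{PDD_cons_1}, so the Jacobian of the equality system has full row rank (it contains an identity block in the copy variables), and together with the remaining box/mobility constraints the relaxed Robinson/MFCQ-type condition required by \cite{ShiPDD2017} is met at every feasible point. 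Combining (i)--(iii) shows that the limit point, together with the accompanying multipliers, satisfies the KKT system of \eqref{PDD_problem5}, which by the equivalence of the first step is exactly a stationary solution of \eqref{multi_ratio_problem}.

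I expect the main obstacle to be the surrogate-validity bookkeeping in the second step: one must confirm that the approximations used in Blocks $\{s_n^s,s_n^r\}$, $\{p_n^s,p_n^r\}$ and $\{\hat{t}_n,t_n,\breve{t}_n\}$ are genuine BSUM majorizers with matching value and gradient at the current iterate, since any mismatch would destroy the monotone behaviour of the AL objective that the outer analysis relies upon. The constraint-qualification verification in the third step is the secondary delicate point, but it is comparatively mild here owing to the identity-Jacobian structure induced by the copy variables; the rest is a direct application of \cite{Hong2016} and \cite{ShiPDD2017}.
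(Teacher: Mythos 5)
Your proposal is correct and takes essentially the same route as the paper, whose own ``proof'' is a one-line deferral to the general convergence theory of the PDD framework in \cite{ShiPDD2017}: your three steps (equivalence of the reformulation \eqref{PDD_problem5} with \eqref{multi_ratio_problem}, inner-loop BSUM stationarity for the AL problem \eqref{AL}, and outer-loop PDD convergence under a constraint qualification supplied by the identity-Jacobian structure of the copy-variable equalities) are exactly the ingredients that citation relies on. The surrogate-validity checks you flag as the main obstacle are indeed the substantive content the paper leaves implicit, and they do hold for the constructions in Appendices \ref{appendix_subproblem1}--\ref{appendix_subproblem8} (quadratic majorization with the bounded curvature constant $\phi$, and linearization of the concave parts of the DC objectives).
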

\begin{proof} \vspace{-0em}
	Please refer to reference \cite{ShiPDD2017} for the detailed proof. \vspace{-0em}
\end{proof} 
Furthermore, we can observe that the complexity of Algorithm \ref{PDD_algorithm} is dominated by solving problem \eqref{subproblem_2} $L^o L^i$ times, where $L^o$ and $L^i$ denote the required numbers of outer and inner iterations. Therefore, the complexity of Algorithm \ref{PDD_algorithm} is on the order of $\mathcal{O}(L^o L^i (2N-2)^3)$.
\vspace{-0em}
\begin{algorithm}[!h]  \small 
	\caption{The Proposed PDD-based Algorithm} \label{PDD_algorithm}
	\begin{algorithmic}[1]
		\STATE Initialize $\bm{\mathcal{Y}}_0$ and $\rho_0$, choose $q<1$. Set the outer iteration number $l^o = 0$.
		\REPEAT
		\STATE Set the inner iteration number $l^i = 0$.
		\REPEAT
		\STATE Update the variables in $\bm{\mathcal{Y}}$ by successively optimizing them in \textbf{Blocks 1-8}.
		\STATE $l^i \leftarrow l^i  +1 $.
		\UNTIL{some convergence condition is met.}
		\STATE Update the dual variables and set $\rho  \leftarrow q  \rho $. 
		\STATE $l^o \leftarrow l^o  +1 $.
		\UNTIL{some convergence condition is met.} 
	\end{algorithmic}  
\end{algorithm} 

\vspace{-0em}
\section{Simulation Results} \label{sec_simulations}
In this section, we provide simulation results to validate the effectiveness of our proposed algorithms and mobile relaying design. In the considered system, the location of the destination is set to $\mathbf{q}_D = (x_D=1000\textrm{m},y_D=0,0)$, i.e., the source and the destination is separated by $1000$m. The nominal system configuration is defined by the following
choice of parameters: $\gamma_0= 80$dB, $v_{\textrm{max}}=15$m/s, $H = 100$m, $P_{\textrm{min}}^s = 10$W, $P_{\textrm{max}}^s = 100$W, $p_{\textrm{max}}^s = p_{\textrm{max}}^r=20$dBm, $R_{\textrm{sum}} = 100$bps/Hz, $\upsilon^s = \upsilon^r = 5$, $M=9.7$kg, $T=120$s, $\delta_t=4$s, $\mathcal{E} = 10^5$J and $\theta = 10^3$J, where the UAV-related parameters are set according to \cite{DJIUAV}. Unless otherwise stated, the parameters correspond to the 810nm laser and the clear air weather condition in Table \ref{tab:laser_parameter} are used throughout this paper.
The constant power consumption $P_{\textrm{on}}$ is set as follows \cite{Cui2004, Shi2016EE}:
\begin{equation} \small
P_{\textrm{on}} = 2(P_{\textrm{DAC}} + P_{\textrm{mix}} + P_{\textrm{filt}}) +3 P_{\textrm{syn}} + 2(P_{\textrm{LNA}} + P_{\textrm{mix}}+P_{\textrm{IFA}} + P_{\textrm{filr}}+P_{\textrm{ADC}}),
\end{equation}
where $P_{\textrm{DAC}}$, $P_{\textrm{mix}}$, $P_{\textrm{filt}}$, $P_{\textrm{syn}}$, $P_{\textrm{LNA}}$, $P_{\textrm{IFA}}$, $P_{\textrm{filr}}$ and $P_{\textrm{ADC}}$ denote the power consumption of the digital to analog converter (DAC), the mixer, the active filters at the transmitter side, the frequency synthesizer, the low-noise amplifier (LNA), the intermediate frequency amplifier (IFA), the active filters at the receiver side, and the analog to digital converter (ADC), respectively. For the detailed values of these parameters, please refer to \cite{Cui2004} and \cite{Shi2016EE}. 
For comparison, we also provide the performance of the AO-based algorithm \cite{Zeng2016}, where the optimization variables are divided into two groups, i.e., 1) the transmit powers  of the source and the UAV; 2) the transmit power of the PB and the trajectory of the UAV. These two groups of variables are alternatively optimized with the other fixed. Note that in the AO-based algorithm, the concept of CCCP is also needed to solve the optimization problem of the second group, therefore it is also a double-loop algorithm. In our simulations, a maximum of $100$ iterations are employed to optimize the variables in the second group.
\begin{figure}[hbtp] \vspace{-0em}
	\setlength{\abovecaptionskip}{-0.1cm}
	\setlength{\belowcaptionskip}{-0.5cm}
	\centering
	\includegraphics[width = 0.55\textwidth]{./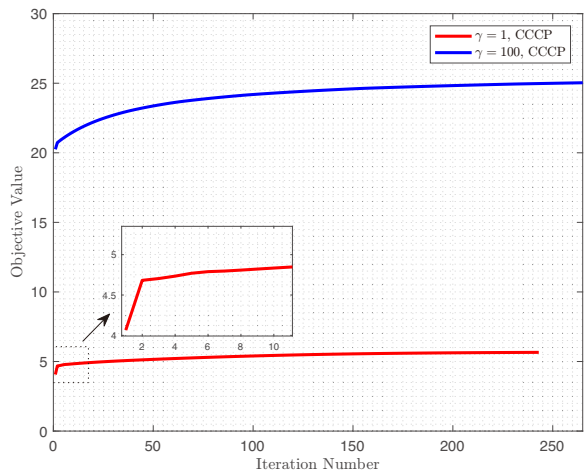}
	\caption{Convergence behavior of Algorithm 1 in terms of the objective value.}\label{convergence_CCCP_AO} \vspace{-0em}
\end{figure}

\vspace{-0em}
\subsubsection{Convergence property}
We first investigate the convergence behaviors of the proposed algorithms, i.e., Algorithm \ref{CCCP_algorithm} (the CCCP-based algorithm) and Algorithm \ref{PDD_algorithm} (the PDD-based algorithm), with different values of $\gamma$, and the results are shown in Fig. \ref{convergence_CCCP_AO} and \ref{convergence_PDD}. It can be observed from Fig. \ref{convergence_CCCP_AO} that the CCCP-based algorithm is monotonic convergent, i.e., the obtained objective value in the current iteration is always larger than or equal to that obtained in the preceding iteration, and Algorithm \ref{CCCP_algorithm} needs a few hundreds of iterations to obtain steady performance. An appealing property of this algorithm is that the solution obtained in each iteration is always feasible, thus even if it is terminated before convergence, the resulting solution is still applicable. In Fig. \ref{convergence_PDD}, we demonstrate the convergence behavior of Algorithm \ref{PDD_algorithm} in terms of the objective value and the constraint violation.\footnote{The constraint violation is defined as the maximum absolute value of all the equality constraints listed in Table \ref{tab:dual_variables}.} As can be seen, Algorithm \ref{PDD_algorithm} converges within $700$ iterations, although this number is larger than that required by Algorithm \ref{CCCP_algorithm}, this does not necessarily mean that Algorithm \ref{PDD_algorithm} is more complex. On the contrary,  Algorithm \ref{PDD_algorithm} is much more simple and implementation-friendly since in each block, the variables for different time slots can be updated in parallel, which makes distributed computing possible. Furthermore, each updating step can either be completed in closed-form or by the bisection method, and this attractive characteristic of Algorithm \ref{PDD_algorithm} avoids the usage of software solvers (usually treated as black-boxes).

Then, in Table \ref{tab:performance_comparison}, we list the steady state performance achieved by the considered algorithms, where we assume that the UAV's initial and final $x-y$ coordinates are predetermined to $(x_I, y_I) = (0, 500\textrm{m})$ and $(x_F , y_F) = (1000\textrm{m}, 500\textrm{m})$ and the location of the PB is set to $(x_{PB},y_{PB}) = (500\textrm{m},800\textrm{m})$. It is observed that when $\gamma=1$, the weighted efficiencies obtained by the considered algorithms are close to each other and the proposed Algorithm \ref{CCCP_algorithm} achieves the best performance. When $\gamma$ is larger, i.e., $\gamma=100$ or $1000$, Algorithms \ref{CCCP_algorithm} and \ref{PDD_algorithm} achieve superior performance gains over the AO-based algorithm, i.e., in these cases the performance of the AO-based algorithm is not competitive anymore. Meanwhile, Algorithms \ref{CCCP_algorithm} and \ref{PDD_algorithm} can achieve a similar performance. As the AO-based algorithm is recognized as the must commonly used algorithm (also it can be viewed as the state-of-the-art) for joint power and trajectory optimization in UAV-enabled mobile relaying systems, our results suggest that the proposed CCCP and PDD-based algorithms are more powerful when handling difficult objective functions and constraints and better performance can be achieved. Moreover, generally, the AO-based algorithm has no theoretical guarantee on the quality of the converged solution, while for our proposed Algorithms \ref{CCCP_algorithm} and \ref{PDD_algorithm}, stationary solutions can be assured according to Propositions \ref{prop1} and \ref{prop2}, respectively.
\vspace{-0em}
\begin{figure}[hbtp] \vspace{-0em}
		\setlength{\abovecaptionskip}{-0.1cm}
	\setlength{\belowcaptionskip}{-0.2cm}
	\centering
	\includegraphics[width = 0.55\textwidth]{./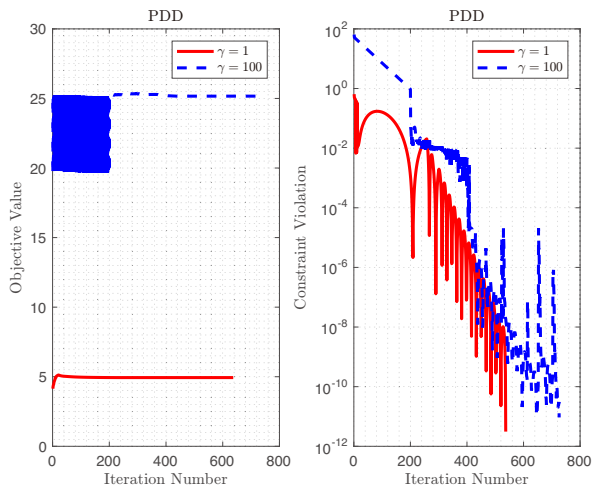}
	\caption{Convergence behavior of Algorithm 2 in terms of the objective value and the constraint violation.}\label{convergence_PDD} \vspace{-0em}
\end{figure}

\begin{table}[htbp] \small \vspace{-0em}
	\centering
	\caption{Steady state performance comparison} \vspace{-0em}
	\begin{tabular}{|c|c|c|c|c|c|c|c|c|c|c|c|}
		\hline
		 & $\gamma=1$ & $\gamma=100$ &$\gamma=1000$ \\
		\hline
		AO & 5.64 & 20.38 & 168.75 \\
		\hline
		Algorithm \ref{CCCP_algorithm} & 5.66 & 25.03 & 220.34\\
		\hline
		Algorithm \ref{PDD_algorithm} & 4.94 & 25.17 & 223.97\\
		\hline
	\end{tabular}
	\label{tab:performance_comparison}
\end{table}

\begin{figure}[hbtp]
		\setlength{\abovecaptionskip}{-0.2cm}
	\setlength{\belowcaptionskip}{-0.5cm}
	\centering
	\includegraphics[width = 1\textwidth]{./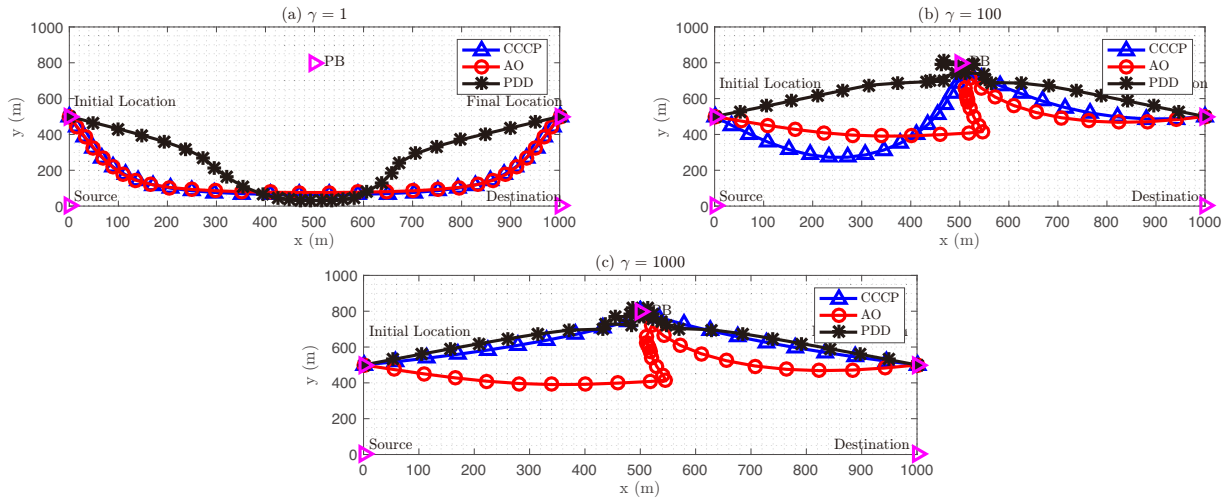}
	\caption{The trajectories of the UAV obtained by the considered algorithms with different values of $\gamma$.}\label{fig_traj} \vspace{-0em}
\end{figure}

\vspace{-0em}
\subsubsection{Impacts of $\gamma$}
In Fig. \ref{fig_traj} and Fig. \ref{fig_PS_ps_pr}, we illustrate the trajectories and transmit powers obtained by the considered algorithms, where the same simulation parameters as that in Table \ref{tab:performance_comparison} are used. As can be seen, when $f_{\textrm{PE}}(\cdot)$ is of relatively low priority (i.e., when $\gamma=1$), the trajectories obtained by the considered three algorithms are similar to each other, i.e., the UAV tends to first fly towards the source to have a better receive SNR (or equivalently receive data rate). Then, it flies close to the destination to deliver the received information from the source to the destination. For the AO-based algorithm and Algorithm \ref{CCCP_algorithm}, the source and the PB are prone to transmit with a larger power when they are near the UAV, and the UAV's transmit power is also in positive proportion to its distance to the destination. However, Algorithm \ref{PDD_algorithm} gets stuck in an unfavorable stationary point in this case. When the priority of $f_{\textrm{PE}}(\cdot)$ gets higher (i.e., when $\gamma=100$ or $1000$), it can be observed that the optimized trajectories are trying to get close to the PB such that the power transmission efficiency would be larger. It is also interesting to see that the trajectories obtained by the considered three algorithms are totally different when $\gamma=100$, this is mainly due to the different design methodologies when deriving the considered algorithms. 
%i.e.: 1) Algorithm \ref{CCCP_algorithm} is trying to balance between $f_{\textrm{EE}}(\cdot)$ and $f_{\textrm{PE}}(\cdot)$, therefore, the UAV first flies close to the source and then to the PB; 2) The AO-based algorithm values $f_{\textrm{EE}}(\cdot)$ as much as $f_{\textrm{PE}}(\cdot)$, therefore the UAV flies close to the source, the PB and the destination successively; 3) Algorithm \ref{PDD_algorithm} puts more emphasis on $f_{\textrm{PE}}(\cdot)$ and the UAV directly flies towards the PB.
Furthermore, it is observed that the UAV flies with a faster speed in some less rewarding locations,
e.g., when $\gamma=1$, the UAV maintains a slower speed when it is close to the source and the destination since in this case, $f_{\textrm{EE}}(\cdot)$ plays a more important role than $f_{\textrm{PE}}(\cdot)$. When $\gamma=1000$, the UAV slows down when it is near the PB such that a higher power transmission efficiency can be achieved.

\begin{figure}[hbtp]
	\setlength{\abovecaptionskip}{-0.2cm}
	\setlength{\belowcaptionskip}{-0.2cm}
	\centering
	\includegraphics[width = 0.85\textwidth]{./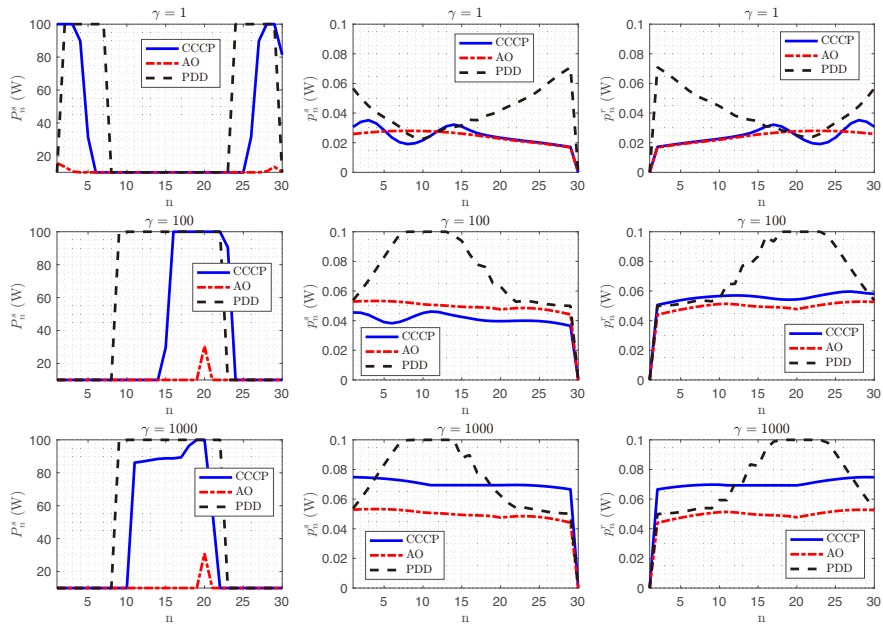}
	\caption{The PB/source/UAV transmit powers obtained by the considered algorithms with different values of $\gamma$.}\label{fig_PS_ps_pr} \vspace{-0em}
\end{figure}

In Fig. \ref{traj_evo}, we show the UAV's trajectories obtained by the considered algorithms with different numbers of iterations when $\gamma=100$. For Algorithm \ref{CCCP_algorithm}, we can see that due to the characteristic of the CCCP method, the solution obtained in the current iteration is heavily dependent on that of the previous iteration. In other words, since we approximate
the original non-convex feasible set in \eqref{equivalent_problem} around the previous solution by a convex subset, the trajectory obtained in the current iteration tends to improve the previous one and thus it is expected that these two trajectory would not be two far away from each other. For the AO-based algorithm, it can be observed that it converges very rapidly, i.e., the trajectory achieved in the first iteration is already very close to the converged one. Also, we can infer that the converged solution would be sensitive to the initialization. For Algorithm \ref{PDD_algorithm}, it is observed that the trajectories in different iterations are not that related as those in the AO-based algorithm and Algorithm \ref{CCCP_algorithm}. This is mainly due to the fact that in the initial few iterations, the parameter $\rho$ is set to be large (a larger $\rho$ means less penalty), therefore the obtained solutions are not always feasible to problem \eqref{PDD_problem5}. As a result, Algorithm \ref{PDD_algorithm} might be able to explore in a larger region of the variable space and search for solutions which can achieve a potentially larger objective value (but not necessarily feasible). Note that this distinguishing property of Algorithm \ref{PDD_algorithm} is very different from those of the AO-based algorithm and Algorithm \ref{CCCP_algorithm}, who usually find a solution in certain subsets of the original feasible set. With the increasing of the iteration number and the decreasing of the parameter $\rho$, the equality constraints are forced to be satisfied and thus a feasible solution (also a stationary solution according to Proposition \ref{prop2}) can be found. 
\begin{figure}[hbtp] \vspace{-0em}
	\setlength{\abovecaptionskip}{-0.2cm}
	\setlength{\belowcaptionskip}{-0.5cm}
	\centering
	\includegraphics[width = 1\textwidth]{./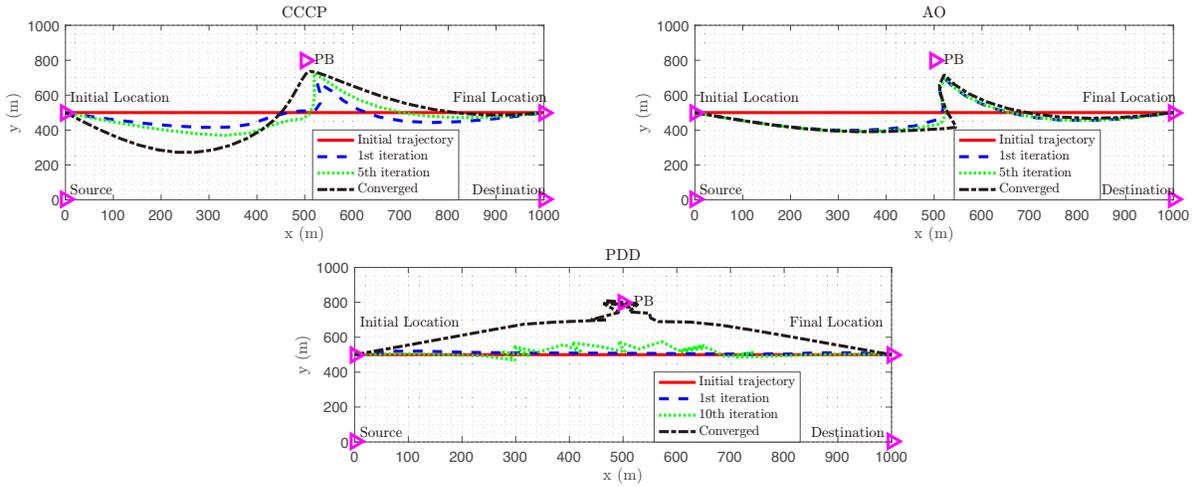}
	\caption{UAV trajectory evolution by the considered algorithms.}\label{traj_evo}  \vspace{-0em}
\end{figure}

\subsubsection{Impacts of the total flight time $T$} In Fig. \ref{figure_Loiter}, the UAV's trajectories obtained by Algorithm \ref{CCCP_algorithm} with different values of $T$ are plotted, where $\gamma$ is set to 20. It can be observed that when $T$ is sufficiently large (e.g., $T=160$s or $T=240$s), the UAV would keep a very low speed near the source (or the destination) for a certain period before it moves towards the destination (or the final location). Therefore, a possible loiter phase is implicitly included in the proposed formulation and this phase is observable when $T$ is large enough. Moreover, we can see that the longer the total flight time $T$, the closer the UAV flies to the source and the destination.

\begin{figure}[hbtp] \vspace{-0em}
	\setlength{\abovecaptionskip}{-0.1cm}
	\setlength{\belowcaptionskip}{-0.5cm}
	\centering
	\includegraphics[width = 0.55\textwidth]{./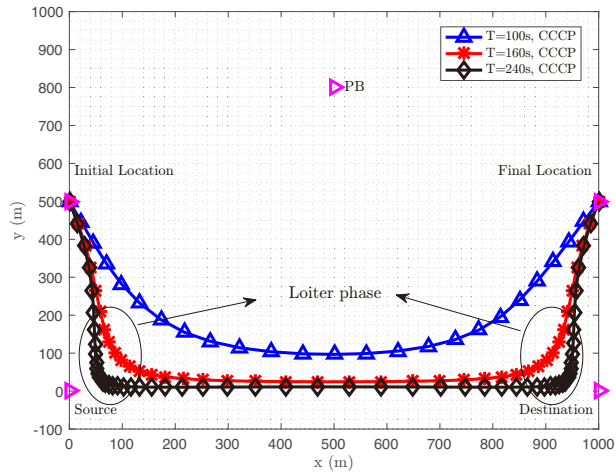}
	\caption{The trajectories of the UAV obtain by Algorithm \ref{CCCP_algorithm} with different values of $T$.}\label{figure_Loiter} \vspace{-0em}
\end{figure}

\subsubsection{Impacts of the laser wavelength and the weather condition}
Finally, in Fig. \ref{figure_weather_wavelength}, we show the UAV's trajectories obtained by Algorithm \ref{CCCP_algorithm} with different laser wavelengths and weather conditions, where $\gamma$ is fixed to $100$. From Fig. \ref{figure_weather_wavelength} (a) (the weather condition is set to be clear air), we can observe that the trajectory obtained when $\lambda=1550$nm is more prone to be close to the PB. This is because the power transmission efficiency of the $1550$nm laser is lower than that of the $810$nm laser when the distance between the PB and the UAV is less than about $5$km \cite{ZhangDLC2018}, thus the UAV should fly towards the PB for a higher power transmission efficiency when the $1550$nm laser is used. A similar observation can also be made from Fig. \ref{figure_weather_wavelength} (b) (the $810$nm laser is used), i.e., when the weather condition is worse (in our case, fog is worse than haze and haze is worse than clear air), the UAV should be more close to the PB for a high power transmission efficiency.

\begin{figure}[hbtp] \vspace{-0em}
	\setlength{\abovecaptionskip}{-0.1cm}
	\setlength{\belowcaptionskip}{-0.5cm}
	\centering
	\includegraphics[width = 0.55\textwidth]{./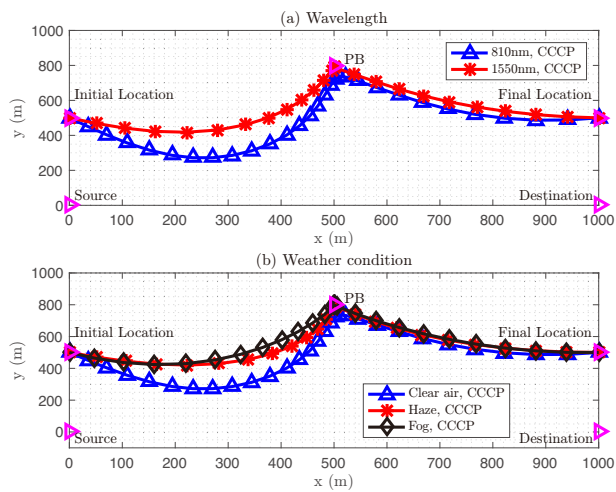}
	\caption{The trajectories of the UAV obtain by Algorithm \ref{CCCP_algorithm} with different laser wavelengths and weather conditions.}\label{figure_weather_wavelength} \vspace{-0em}
\end{figure}

\vspace{-0em}
\section{Conclusion} \label{sec_conclusion}
This paper proposed a new UAV-enabled mobile relaying system, where a laser PB is employed to wirelessly charge the energy-constrained UAV relay. We aimed to maximize the information/power transmission efficiency of the system by jointly optimizing the transmit powers and the UAV's trajectory. Two efficient algorithms, i.e., the CCCP and PDD-based algorithms, were proposed to address the resulting problem, which is highly non-convex and challenging to solve. Numerical results were presented to validate the effectiveness of the proposed algorithms. We have demonstrated that the proposed algorithms outperform the conventional AO-based algorithm, especially when the weighting factor $\gamma$ is large. It was also shown that there is a tradeoff between maximizing the information transmission efficiency and the power transmission efficiency, and the UAV's trajectory is highly related to the laser wavelength and the weather condition.

\vspace{-0em}
\begin{appendix}
	\vspace{-0em}
	\subsection{Solution to Problem \eqref{subproblem_1}} \label{appendix_subproblem1}
Let us first focus on the optimization of $s_n^r$. Since the objective of problem \eqref{subproblem_1} is not concave and also does not exhibit a DC structure, simple linear approximation does not work in this case. As a result, we consider the following quadratic upper bound of the objective:
	\begin{equation} \small
	u_n(s_n^r, \hat{s}_n^r) = (p_n^r\gamma_0 - s_n^r d_n^D + \rho \mu_n^D)^2 + 2(\log_2(1+\hat{s}_n^r)-\bar{s}_n^r+\rho\zeta_n^r)\frac{1}{\ln2(1+\hat{s}_n^r)} s_n^r + \frac{1}{2}\phi(s_n^r-\hat{s}_n^r)^2,
	\end{equation}
where $\hat{s}_n^r$ denotes the value of the variable $s_n^r$ in the previous iteration, $\phi$ is a scalar which should satisfy $\phi-\bar{\phi} \geq 0$ \cite{Hong2016} and $\bar{\phi}$ denotes the second-order derivative of $\left(\log_2\left(1 + s_n^r\right) - \bar{s}_n^r+ \rho \zeta_n^r\right)^2$. Since $s_n^r$ is bounded by $0 \leq s_n^r \leq \frac{p_r^{\textrm{max}}\gamma_0}{H^2}$, we can set the value of $\phi$ to 
	\begin{equation} \small
		\phi=\max\left(\frac{2}{(\ln2)^2(1+\hat{s}_n^r)^2} -2(\log_2(1+\hat{s}_n^r)-\bar{s}_n^r+\rho\zeta_n^r)\frac{1}{\ln2 (1+\hat{s}_n^r)^2}\right)\\
	= \frac{2+2\ln 2(\bar{s}_n^r-\rho\zeta_n^r)}{(\ln2)^2}.
	\end{equation}
	
With the aforementioned approximation, we have the following problem:
\begin{equation} \small
\begin{array}{l}
\min \limits_{s_n^r} \; u_n(s_n^r, \hat{s}_n^r)\\
\textrm{s.t.} \; s_n^r \geq 0,\; s_1^r = 0,
\end{array}
\end{equation}
whose optimal solution can be obtained by a simple projection operation, i.e., 
\begin{equation} \small
s_n^r = \Pi_{[0,+\infty)} \left(\frac{\phi \hat{s}_n^r+2d_n^Dp_n^r \gamma_0+2\rho d_n^D\mu_n^D - \frac{2}{\ln 2(1+\hat{s}_n^r)} (\log_2(1+\hat{s}_n^r)-\bar{s}_n^r+\rho \zeta_n^r) }{2(d_n^D)^2+\phi}\right),\;n\in \mathcal{N}\backslash\{1\},
\end{equation}
and $s_1^r = 0$. The optimization of $s_n^s$ can be similarly tackled without difficulty.
\vspace{-0em}
\subsection{Optimal Solution to Problem \eqref{subproblem_2}} \label{appendix_subproblem2}
We first introduce a Lagrange multiplier $\lambda$ to the first constraint of problem \eqref{subproblem_2} and define the following partial Lagrangian:
$\mathcal{L}(\{\bar{s}_n^s\},\{\bar{s}_n^r\},\lambda) \triangleq  f_{\eqref{subproblem_2}}(\bar{s}_n^s, \bar{s}_n^r)+ \lambda\Big(R_{\textrm{sum}} - \sum\limits_{n=2}^{N} \bar{s}_n^r \Big)$, 
where $f_{\eqref{subproblem_2}}(\bar{s}_n^s, \bar{s}_n^r)$ denotes the objective function of problem \eqref{subproblem_2}. Then, the dual function, denoted by $d(\lambda)$, can be written as
\begin{equation} \label{dual_function} \small
\begin{array}{l}
d(\lambda) \triangleq \min\limits_{\{\bar{s}_n^s\},\{\bar{s}_n^r\}} \mathcal{L}(\{\bar{s}_n^s\},\{\bar{s}_n^r\},\lambda).
\end{array}
\end{equation}
We need to find a nonnegative $\lambda$ to minimize the dual function $d(\lambda) $, i.e., solving the dual problem:
$\max\limits_{\lambda \geq 0} \; d(\lambda)$.

In order to express problem \eqref{dual_function} in a more compact form, we introduce the following notations:
\begin{equation} \small
\begin{array}{l}
\mathbf{x} = [\bar{s}_1^s,\cdots,\bar{s}_{N-1}^s,\bar{s}_2^r,\cdots,\bar{s}_N^r]^T,\;
\mathbf{a}_1 = [\mathbf{0}_{N-1},-\mathbf{1}_{N-1}]^T,\\
\mathbf{a}_m = [-\mathbf{1}_{(m-1)\times1},\mathbf{0}_{(N-m)\times 1},\mathbf{1}_{(m-1)\times1},\mathbf{0}_{(N-m)\times 1} ]^T,\;m\in\mathcal{N}\backslash\{1\},\;
a_m = \tilde{s}_m - \rho \zeta_m^i,\\
b_n^r = \log_2\left(1 + s_n^r\right) + \rho \zeta_n^r,\;
b_n^s = \log_2\left(1 + s_n^s\right) + \rho \zeta_n^s,\;
\mathbf{b}_n^r = [\mathbf{0}_{(N-1)\times 1},\mathbf{e}_{n-1}]^T,\;
\mathbf{b}_n^s = [\mathbf{e}_{n}, \mathbf{0}_{(N-1)\times 1}]^T,\\

\mathbf{B} = \frac{1}{2\rho} \sum\limits_{n\in\mathcal{N}\backslash\{1\}}\mathbf{b}_n^{r}\mathbf{b}_n^{rT} + \frac{1}{2\rho}\sum\limits_{n\in\mathcal{N}\backslash\{N\}} \mathbf{b}_n^{s} \mathbf{b}_n^{sT} + \frac{1}{2\rho}\sum\limits_{m\in\mathcal{N}\backslash\{1\}} \mathbf{a}_m \mathbf{a}_m^T ,\\
\mathbf{b} = -\frac{1}{\rho} \sum\limits_{n\in\mathcal{N}\backslash\{1\}} b_n^r \mathbf{b}_n^r - \frac{1}{\rho}\sum\limits_{n\in\mathcal{N}\backslash\{N\}} b_n^s \mathbf{b}_n^s- \frac{1}{\rho}\sum\limits_{m\in\mathcal{N}\backslash\{1\}} a_m \mathbf{a}_m+\frac{\mathbf{a}_1 }{\upsilon^s\sum\limits_{n = 1}^{N-1} p_n^s + \upsilon^r\sum\limits_{n = 2}^{N} p_n^r + N P_{\textrm{on}}} ,\\
%\frac{1}{2\rho}\sum\limits_{n\in\mathcal{N}\backslash\{1\}}(\mathbf{b}_n^{rT} \mathbf{x}-b_n^r)^2 +  \frac{1}{2\rho}\sum\limits_{n\in\mathcal{N}\backslash\{N\}}(\mathbf{b}_n^{sT} \mathbf{x}-b_n^s)^2 +\frac{\mathbf{a}_1^T \mathbf{x}}{\sum\limits_{n = 1}^{N-1} p_n^s + \sum\limits_{n = 2}^{N} p_n^r + P_{\textrm{on}}}\\
%+ \frac{1}{2\rho}\sum\limits_{m\in\mathcal{N}\backslash\{1\}} (\mathbf{a}_m^T \mathbf{x} - a_m)^2 +\lambda \mathbf{a}_1^T\mathbf{x} + \lambda R_{\textrm{sum}},\\
\end{array}
\end{equation}
where $\mathbf{e}_n$ denotes a vector with a single non-zero component (equals to $1$) located at $n$, as a result, problem \eqref{dual_function} can be equivalently formulated as
\begin{equation} \small
\min\limits_{\{\bar{s}_n^r,\;\bar{s}_n^s\}} \;\mathbf{x}^T \mathbf{B} \mathbf{x} + \mathbf{x}^T (\mathbf{b}+\lambda\mathbf{a}_1) + \lambda R_{\textrm{sum}},
\end{equation}
By resorting to the first order optimality condition, we have
$\mathbf{x}^*  = -\frac{1}{2}\mathbf{B}^{-1}(\mathbf{b}+\lambda^*\mathbf{a}_1)$ and $\mathbf{a}_1^T \mathbf{x}^* + R_{\textrm{sum}} = 0$, 
where $\mathbf{x}^*$ and $\lambda^*$ denote the optimal primal and dual variables. Therefore, the optimal dual variable $\lambda$ can be obtained by
$\lambda^* =  \frac{2R_{\textrm{sum}} - \mathbf{a}_1^T \mathbf{B}^{-1}\mathbf{b}}{\mathbf{a}_1^T \mathbf{B}^{-1} \mathbf{a}_1}$, 
and the optimal solution of problem \eqref{subproblem_2} can be expressed as
\begin{equation} \small
\mathbf{x}^* = \left\{ \begin{array}{l}
-\frac{1}{2}\mathbf{B}^{-1}\mathbf{b},\; \textrm{if} \; \frac{1}{2}\mathbf{a}_1^T \mathbf{B}^{-1} \mathbf{b} \geq R_{\textrm{sum}},\\
-\frac{1}{2}\mathbf{B}^{-1}(\mathbf{b}+\lambda^*\mathbf{a}_1),\;\textrm{otherwise}.
\end{array}
\right.
\end{equation}
\vspace{-0em}
\subsection{Solution to Problem \eqref{subproblem_3} } \label{appendix_subproblem3}
According to the BSUM method, we consider the following problem, which is obtained by replacing the objective of \eqref{subproblem_3} by its linear approximation,
\begin{equation} \label{subproblem_3_1}  \small
	\begin{array}{l}
	\mathop {\min }\limits_{\{ p_n^s,p_n^r\} } -\left( \sum\limits_{n = 1}^{N-1}  g_n(\hat{p}_n^r,\hat{p}_n^s)(p_n^r - \hat{p}_n^r) +  \sum\limits_{n = 2}^{N} g_n(\hat{p}_n^r,\hat{p}_n^s)(p_n^s - \hat{p}_n^s)\right)\\
	+ \frac{1}{2\rho}\sum\limits_{n\in\mathcal{N}}(p_n^r\gamma_0 - s_n^r d_n^D + \rho \mu_n^D)^2
	+ \frac{1}{2\rho}\sum\limits_{n\in\mathcal{N}}(p_n^s\gamma_0 - s_n^s d_n^S + \rho \mu_n^S)^2\\
	\textrm{s.t.}\;  \eqref{power_cons_ori_UAV},%0 \leq p^s_n \leq  p_{\textrm{max}}^s,\; n\in \mathcal{N}\backslash\{N\},\; 0 \leq p_{n}^r  \leq p_{\textrm{max}}^r,\;n\in \mathcal{N}\backslash\{1\}, 
	\end{array}
\end{equation}
where $g_n(\hat{p}_n^r,\hat{p}_n^s) = {-\sum\limits_{n = 2}^{N}  \bar{s}_n^r }\Big{/}{\left(\upsilon^s\sum\limits_{n = 1}^{N-1} \hat{p}_n^s + \upsilon^r\sum\limits_{n = 2}^{N} \hat{p}_n^r + N P_{\textrm{on}}\right)^2}$, $\hat{p}_n^r$ and $\hat{p}_n^s$ denote the values of ${p}_n^r$ and ${p}_n^s$ in the previous iteration. As can be seen, problem \eqref{subproblem_3_1} is separable among different $n$ and its optimal solution can be obtained by a simple projection operation, i.e.
$p_n^r = \Pi_{[0,p_{\textrm{max}}^r]}\left(\frac{g_n(\hat{p}_n^r,\hat{p}_n^s)\rho}{\gamma_0^2} + \frac{s_n^r d_n^D - \rho \mu_n^D}{\gamma_0}\right)$ and $
p_n^s = \Pi_{[0,p_{\textrm{max}}^s]}\left(\frac{g_n(\hat{p}_n^r,\hat{p}_n^s)\rho}{\gamma_0^2} + \frac{s_n^s d_n^S - \rho \mu_n^S}{\gamma_0}\right)$.

\vspace{-0em}
\subsection{Optimal Solution to Problem \eqref{subproblem_5}} \label{appendix_subproblem5} 
In order to express problem \eqref{subproblem_5} in a standard form, we introduce the following notations: $\mathbf{x} = [\bar{\mathbf{q}}_n^T, d_n^S]^T$, $\mathbf{a} = [(- \mathbf{q}_n + \rho \bar{\bm{\lambda}}_n)^T,0]^T$, $\tilde{\mathbf{a}} = [(- \tilde{\mathbf{q}}_n - \rho \tilde{\bm{\lambda}}_n)^T,0]^T$, $\bar{\mathbf{a}} = [0,0,s_n^s]^T$, $\tilde{\mathbf{A}} = \mathbf{I} - \mathbf{e}_3\mathbf{e}_3^T$, $c = -p_n^s\gamma_0  - \rho \mu_n^S$, $\mathbf{A} = 2\tilde{\mathbf{A}}+ \bar{\mathbf{a}}\bar{\mathbf{a}}^T$, $\mathbf{b} = 2\tilde{\mathbf{A}}^T\mathbf{a} + 2\tilde{\mathbf{A}}^T\tilde{\mathbf{a}} +2\bar{\mathbf{a}}c$, $\mathbf{c} = [\mathbf{q}_S^T,0]^T$, $\mathbf{d} = [0,0,1]^T$, $\tilde{\mathbf{c}} = -2\tilde{\mathbf{A}}^T\mathbf{c} - \mathbf{d}$ and $d = \mathbf{c}^T\mathbf{c} + H^2 $. Consequently, problem \eqref{subproblem_5} can be equivalently written as
\begin{equation} \label{QCQP_1_3D} \small
\begin{array}{l}
\min\limits_{\mathbf{x}}\;\mathbf{x}^T \mathbf{A} \mathbf{x} + \mathbf{x}^T \mathbf{b}\\
\textrm{s.t.}\;\mathbf{x}^T\tilde{\mathbf{A}}\mathbf{x} + \mathbf{x}^T \tilde{\mathbf{c}} + d = 0.
\end{array}
\end{equation}
The Lagrangian of problem \eqref{QCQP_1_3D} can be expressed as
$\mathcal{L} = \mathbf{x}^T \mathbf{A} \mathbf{x} + \mathbf{x}^T \mathbf{b} + \lambda(\mathbf{x}^T\tilde{\mathbf{A}}\mathbf{x} + \mathbf{x}^T \tilde{\mathbf{c}} + d)$, 
where $\lambda$ denotes the Lagrangian multiplier.
According to the first-order optimality condition, we have
$\mathbf{x} = (2 \mathbf{A} + 2\lambda \tilde{\mathbf{A}})^{-1}(-\mathbf{b}-\lambda \tilde{\mathbf{c}}  )$.
Since $2 \mathbf{A} + 2\lambda \tilde{\mathbf{A}} \succeq \mathbf{0}$ should be satisfied in order to make problem \eqref{QCQP_1_3D} feasible, thus $\lambda \geq \max(-1,-(s_n^s)^2)$ holds.
Then, the optimal dual variable $\lambda^*$ can be found by resorting to the bisection method or the Newton method and then the optimal solution of problem \eqref{QCQP_1_3D} can be obtained.
\vspace{-0em}
\subsection{Solution to Problem \eqref{block8_2}} \label{appendix_subproblem8}
By applying first-order approximation to $ \gamma \hat{f}_{\textrm{PE}} (\{t_n,\hat{t}_n,P_n^s\})$, we can obtain the following convex approximation of problem \eqref{block8_2}:
\begin{equation} \label{subproblem_8} \small
\begin{array}{l}
\min \limits_{\mathbf{x}}\;  \mathbf{x}^T \mathbf{A}\mathbf{x} + \mathbf{x}^T \mathbf{b}\\
\textrm{s.t.} \; P_{\textrm{min}}^s \mathbf{1} \leq \mathbf{x} \leq P_{\textrm{max}}^s \mathbf{1},
\end{array}
\end{equation}
where 
$\mathbf{A} = \frac{1}{2\rho}\sum\limits_{n\in \mathcal{N}}t_n^2 \mathbf{e}_n \mathbf{e}_n^T $, $\mathbf{b} = \frac{1}{\rho} \sum \limits_{n \in \mathcal{N}}t_na_n \mathbf{e}_n+ \sum\limits_{n\in \mathcal{N}} \frac{\gamma \sum\limits_{n=1}^N (a_1 a_2 \hat{t}_n + a_2 b_1 t_n + b_2) }{\left(\sum\limits_{n=1}^N \tilde{P}_n^s\right)^2}\mathbf{e}_n$, $a_n = - \hat{t}_n + \rho \xi_n^S$ and $\tilde{P}_n^s$ denotes the value of ${P}_n^s$ in the previous iteration. It can be observed that with this approximation, problem \eqref{block8_2} is fully decomposed for different $n$, due to the fact that $\mathbf{A}$ is a diagonal matrix. Then, the optimal solution of problem \eqref{block8_2} can be expressed as $\mathbf{x} = \Pi_{[P_{\textrm{min}}^s,P_{\textrm{max}}^s]}\left((-\mathbf{b}/2)\odot(1/\textrm{diag}(\mathbf{A}))\right)$.
	\end{appendix}

\bibliographystyle{IEEETran}
\bibliography{references}
\end{document}